\def\x{\mathbf x}
\def\r{\mathbf r}
\def\y{\mathbf y}
\def\u{\mathbf u}
\def\hx{{\hat{\mathbf x}}}
\def\hr{{\hat{\mathbf r}}}
\def\hy{{\hat{\mathbf y}}}
\def\hz{{\hat{\mathbf z}}}
\def\hp{\hat{p}}
\def\tx{{\tilde{\mathbf x}}}
\def\tmu{\tilde{\mu}}
\def\bbE{\mathbb E}
\def\btheta{{\boldsymbol \theta}}
\def\bphi{{\boldsymbol \phi}}
\def\bvarphi{{\boldsymbol \varphi}}
\DeclareMathOperator*{\sgn}{sgn}
\DeclareMathOperator*{\logistic}{logistic}
\newtheorem{proposition}{Proposition}
\newcommand{\red}[1] {\textcolor[rgb]{1.0,0.0,0.0}{{#1}}}
\newcolumntype{C}[1]{>{\centering}p{#1}}
\begin{document}

\title{Learning Scalable $\ell_\infty$-constrained Near-lossless Image Compression via \\Joint Lossy Image and Residual Compression}

\author{Yuanchao Bai$^{1}$, Xianming Liu$^{1,2}\thanks{Corresponding author}$, Wangmeng Zuo$^{1,2}$, Yaowei Wang$^{1}$, Xiangyang Ji$^{3}$\\
$^1$Peng Cheng Laboratory, $^2$Harbin Institute of Technology, $^3$Tsinghua University\\
{\tt\small \{baiych,wangyw\}@pcl.ac.cn, \{csxm,wmzuo\}@hit.edu.cn, xyji@tsinghua.edu.cn}
}

\maketitle

\begin{abstract}
   We propose a novel joint lossy image and residual compression framework for learning $\ell_\infty$-constrained near-lossless image compression.
   Specifically, we obtain a lossy reconstruction of the raw image through lossy image compression and uniformly quantize the corresponding residual to satisfy a given tight $\ell_\infty$ error bound.
   Suppose that the error bound is zero, \textit{i.e.}, lossless image compression, we formulate the joint optimization problem of compressing both the lossy image and the original residual in terms of variational auto-encoders and solve it with end-to-end training.
   To achieve scalable compression with the error bound larger than zero, we derive the probability model of the quantized residual by quantizing the learned probability model of the original residual, instead of training multiple networks.
   We further correct the bias of the derived probability model caused by the context mismatch between training and inference.
   Finally, the quantized residual is encoded according to the bias-corrected probability model and is concatenated with the bitstream of the compressed lossy image.
   Experimental results demonstrate that our near-lossless codec achieves the state-of-the-art performance for lossless and near-lossless image compression, and achieves competitive PSNR while much smaller $\ell_\infty$ error compared with lossy image codecs at high bit rates.
\end{abstract}

\section{Introduction}
\label{sec:intro}
Image compression is a ubiquitous technique in computer vision.
For certain applications with stringent demands on image fidelity, such as medical imaging or image archiving, the most reliable choice is lossless image compression. However, the compression ratio of lossless compression is upper-bounded by Shannon's source coding theorem \cite{shannon1948mathematical}, and is typically between $2$:$1$ and $3$:$1$ for practical lossless image codecs \cite{calic,weinberger2000loco,skodras2001j2k,webp,bpg,sneyers2016flif}. To improve the compression performance while keeping the reliability of the decoded images, $\ell_\infty$-constrained near-lossless image compression is developed \cite{chen1994near,ke1998near,nll1998JEI,weinberger2000loco,Wu2000nll} and standardized in traditional codecs, \eg, JPEG-LS \cite{weinberger2000loco} and CALIC \cite{Wu2000nll}.
Different from lossy image compression with Peak Signal-to-Noise Ratio (PSNR) or Multi-Scale Structural SIMilarity index (MS-SSIM) \cite{wang2004ssim,wang2003msssim} distortion measures, $\ell_\infty$-constrained near-lossless image compression requires the maximum reconstruction error of each pixel to be no larger than a given tight numerical bound.
Lossless image compression is a special case of near-lossless image compression, when the tight error bound is zero.

With the fast development of deep neural networks (DNNs), learning-based lossy image compression \cite{Toderici2016iclr,Balle2017iclr,theis2017iclr,toderici2017full,rippel2017icml,li2018cvpr,Balle2018variational,minnen2018nips,Mentzer2018cvpr,Lee2019Context,Choi2019iccv,cheng2020cvpr,li2020pami,Ma2020pami,high2020nips} has achieved tremendous progress over the last four years.
Most recent methods for lossy image compression adopt variational auto-encoder (VAE) architecture \cite{kingma2013auto,kingma2019introduction} based on transform coding \cite{goyal2001theoretical}, where the rate is modeled by the entropy of quantized latent variables and the reconstruction distortion is measured by PSNR or MS-SSIM. Through end-to-end rate-distortion optimization, the state-of-the-art methods, such as \cite{cheng2020cvpr}, can achieve comparable performance with the lastest compression standard Versatile Video Coding (VVC) \cite{vvc}.
However, the above transform coding scheme cannot be directly employed on near-lossless image compression, because it is difficult for DNN-based transforms to satisfy a tight bound on the maximum reconstruction error of each pixel, even without quantization. Invertible transforms, such as integer discrete flow \cite{max2019nips} or wavelet-like transform \cite{Ma2020pami}, are possible solutions to lossless compression but not to general near-lossless compression.

In this paper, we propose a new joint lossy image and residual compression framework for learning near-lossless image compression, inspired by the traditional ``lossy plus residual'' coding scheme \cite{elnahas1986data,melnychuck1989survey,nll1998JEI}.
Specifically, we obtain a lossy reconstruction of the raw image through lossy image compression and uniformly quantize the corresponding residual to satisfy the given $\ell_\infty$ error bound $\tau$.
Suppose that the error bound $\tau$ is zero, \ie, lossless image compression, we formulate the joint optimization problem of compressing both the lossy image and the original residual in terms of VAEs \cite{kingma2013auto,kingma2019introduction}, and solve it with end-to-end training. Note that our VAE model is novel, different from transform coding based VAEs \cite{Balle2017iclr,theis2017iclr,Balle2018variational} for simply lossy image compression or bits-back coding based VAEs \cite{iclr2019bitback,kingma2019bitswap} for lossless image compression.

To achieve scalable near-lossless compression with error bound $\tau>0$, we derive the probability model of the quantized residual by quantizing the learned probability model of the original residual at $\tau=0$, instead of training multiple networks.
Because residual quantization leads to the context mismatch between training and inference, we further propose a bias correction scheme to correct the bias of the derived probability model.
An arithmetic coder \cite{arithmetic_coding} is adopted to encode the quantized residual according to the bias-corrected probability model. Finally, the near-lossless compressed image is stored including the bitstreams of the encoded lossy image and the quantized residual.

Our main contributions are summarized as follows:
\begin{itemize}[leftmargin=*, topsep=0pt]
    \setlength{\itemsep}{0pt}
    \setlength{\parsep}{0pt}
    \setlength{\parskip}{0pt}
    \item We propose a joint lossy image and residual compression framework to realize learning-based lossless and near-lossless image compression. The framework is interpreted as a VAE model and can be end-to-end optimized.
    \item We realize scalable compression by deriving the probability model of the quantized residual from the learned probability model of the original residual, instead of training multiple networks. A bias correction scheme further improves the compression performance.
    \item Our codec achieves the state-of-the-art performance for lossless and near-lossless image compression, and achieves competitive PSNR while much smaller $\ell_\infty$ error compared with lossy image codecs at high bit rates.
\end{itemize}

\section{Related Work}
\label{sec:related_work}
{\bfseries Learning-based Lossy Image Compression.}
Early learning-based methods \cite{Toderici2016iclr,toderici2017full} for lossy image compression are based on recurrent neural networks (RNNs).
Following \cite{Balle2017iclr,theis2017iclr}, most recent learned methods \cite{rippel2017icml,li2018cvpr,Balle2018variational,minnen2018nips,Mentzer2018cvpr,Lee2019Context,Choi2019iccv,cheng2020cvpr,li2020pami,Ma2020pami,high2020nips} adopt convolutional neural networks (CNNs) and can be interpreted as VAEs \cite{kingma2013auto,kingma2019introduction} based on transform coding \cite{goyal2001theoretical}.
In our joint lossy image and residual compression framework, we take advantage of the advanced transform (network structures), quantization and entropy coding techniques in the existing learning-based methods for lossy image compression.

{\bfseries Learning-based Lossless Image Compression.}
Given the strong connections between lossless compression and unsupervised learning, auto-aggressive models \cite{pixelrnn2016icml,pixelcnn,pixelcnn_pp}, flow models \cite{max2019nips}, bits-back coding based VAEs \cite{iclr2019bitback,kingma2019bitswap} and other specific models \cite{Mentzer2019cvpr,mentzer2020cvpr,Ma2020pami}, are introduced to approximate the true distribution of raw images for entropy coding.
Previous work \cite{mentzer2020cvpr} uses traditional BPG lossy image codec \cite{bpg} to compress raw images and proposes a CNN to further compress residuals, which is a special case of our framework. Beyond \cite{mentzer2020cvpr}, our lossy image compressor and residual compressor are jointly optimized through end-to-end training and are interpretable as a VAE model.

{\bfseries Near-lossless Image Compression.}
Basically, previous methods for near-lossless image compression can be divided into three categories: 1) \textit{pre-quantization}: adjusting raw pixel values to the $\ell_\infty$ error bound, and then compressing the pre-processed images with lossless image compression, \eg, near-lossless WebP \cite{webp}; 2) \textit{predictive coding}: predicting subsequent pixels based on previously encoded pixels, then quantizing predication residuals to satisfy the $\ell_\infty$ error bound, and finally compressing the quantized residuals, \eg, \cite{chen1994near,ke1998near}, near-lossless JPEG-LS \cite{weinberger2000loco} and near-lossless CALIC \cite{Wu2000nll}; 3) \textit{lossy plus residual coding}: similar to 2), but replacing predictive coder with lossy image coder, and both the lossy image and the quantized residual are encoded, \eg, in \cite{nll1998JEI}.

In this paper, we propose a joint lossy image and residual compression framework for learning-based near-lossless image compression, inspired by lossy plus residual coding.
\textit{To the best of our knowledge, we propose the first deep-learning-based near-lossless image codec.}
Recently, CNN-based soft-decoding methods \cite{zhang2019dcc,zhang2020ultra} are proposed to improve the reconstruction performance of near-lossless CALIC \cite{Wu2000nll}. However, these methods should belong to image reconstruction rather than image compression.

\section{Scalable Near-lossless Image Compression}
\label{sec:nll_codec}
\subsection{Overview of Compression Framework}
Given a tight $\ell_\infty$ bound $\tau\in\{0,1,2,\ldots\}$, near-lossless codecs compress a raw image satisfying the following distortion measure:
\begin{equation}
    D_{nll}(\x,\hx)=\|\x-\hx\|_\infty=\max_{i,c} |x_{i,c}-\hat{x}_{i,c}|\le\tau
    \label{eq:l_infty}
\end{equation}
where $\x$ and $\hx$ are a raw image and its near-lossless reconstructed counterpart. $x_{i,c}$ and $\hat{x}_{i,c}$ are the pixels of $\x$ and $\hx$, respectively. $i$ denotes the $i$-th spatial position in 2D raster-scan order, and $c$ denotes the $c$-th channel.

In order to realize the $\ell_\infty$ error bound \eqref{eq:l_infty}, we propose a near-lossless image compression framework by integrating lossy image compression and residual compression. We first obtain a lossy reconstruction $\tx$ of the raw image $\x$ through lossy image compression. Lossy image compression methods, such as traditional \cite{wallace1992jpeg,skodras2001j2k,webp,bpg} or learned methods \cite{li2018cvpr,Balle2018variational,minnen2018nips,Mentzer2018cvpr,Lee2019Context,Choi2019iccv,cheng2020cvpr,li2020pami,Ma2020pami}, can achieve high PSNR results at relatively low bit rates, but it is difficult for these methods to ensure a tight error bound $\tau$ of each pixel in $\tx$. We then compute the residual $\r=\x-\tx$ and suppose that $\r$ is quantized to $\hr$. Let $\hx=\tx+\hr$, the reconstruction error $\x-\hx$ is equivalent to the quantization error $\r-\hr$ of $\r$. Thus, we adopt a uniform residual quantizer whose bin size is $2\tau+1$ and quantized value is \cite{weinberger2000loco,Wu2000nll}:
\begin{equation}
    \hat{r}_{i,c}=\sgn(r_{i,c})(2\tau+1)\lfloor(|r_{i,c}|+\tau)/(2\tau+1)\rfloor
    \label{eq:r_quantization}
\end{equation}
where $\sgn(\cdot)$ denotes the sign function. $r_{i,c}$ and $\hat{r}_{i,c}$ are the elements of $\r$ and $\hr$, respectively.
With \eqref{eq:r_quantization}, we now have $|r_{i,c}-\hat{r}_{i,c}|\le\tau$ for each $\hat{r}_{i,c}$ in $\hr$, satisfying the tight error bound \eqref{eq:l_infty}.
Finally, we encode $\hr$ and concatenate it with the compressed $\tx$, and send them to the decoder.

To compress both the lossy image $\tx$ and the quantized residual $\hr$ effectively leads to a challenging joint optimization problem.
In the following subsections, we first propose an end-to-end trainable VAE model \cite{kingma2013auto,kingma2019introduction} to solve the joint optimization problem with $\tau=0$.
Then, we propose a scalable compression scheme with bias correction for $\tau>0$ based on the learned VAE model.

\subsection{Joint Lossy Image \& Residual Compression}
\label{subsec:lossy_residue}
Before addressing the joint compression of the lossy image $\tx$ and the quantized residual $\hr$ with variable $\tau$, we first solve the special case of near-lossless image compression with $\tau=0$, \ie, lossless image compression.

{\bfseries Problem Formulation.} Assuming that raw images are sampled from an unknown probability distribution $p(\x)$, the compression performance of lossless image compression depends on how well we can approximate $p(\x)$ with an underlying model $p_\btheta(\x)$. We adopt the latent variable model which is formulated by a marginal distribution $p_\btheta(\x)=\int p_\btheta(\x,\y)d\y$. $\y$ is an unobserved latent variable and $\btheta$ denote the parameters of this model.

Since directly learning the marginal distribution $p_\btheta(\x)$ is typically intractable, one alternative way is to optimize the evidence lower bound (ELBO) via VAEs \cite{kingma2013auto,kingma2019introduction}. By introducing an inference model $q_\bphi(\y|\x)$ to approximate the posterior $p_\btheta(\y|\x)$, we can rewrite the logarithm of the marginal likelihood $p_\btheta(\x)$ as:
\begin{equation}
    \log p_\btheta(\x)=
    \underbrace{\bbE_{q_\bphi(\y|\x)}\log\frac{p_\btheta(\x,\y)}{q_\bphi(\y|\x)}}_{ELBO}+\underbrace{\bbE_{q_\bphi(\y|\x)}\log\frac{q_\bphi(\y|\x)}{p_\btheta(\y|\x)}}_{D_{kl}(q_\bphi(\y|\x)||p_\btheta(\y|\x))}
    \label{eq:elbo}
\end{equation}
where $D_{kl}(\cdot||\cdot)$ is the Kullback-Leibler (KL) divergence. $\bphi$ denote the parameters of the inference model $q_\bphi(\y|\x)$. Since $D_{kl}(q_\bphi(\y|\x)||p_\btheta(\y|\x))\ge0$ and $\log p_\btheta(\x)\le0$, ELBO is the lower bound of $\log p_\btheta(\x)$.

From \eqref{eq:elbo}, we can minimize the expectation of negative ELBO as a proxy for the expected codelength $\bbE_{p(\x)}[-\log p_\btheta(\x)]$. In our compression framework, we first adopt lossy image compression based on transform coding \cite{goyal2001theoretical}, and the expectation of negative ELBO can be reformulated as follows:
\begin{equation}
    \bbE_{p(\x)}\bbE_{q_\bphi(\hy|\x)}\left[-\log p_\btheta(\x|\hy)-\log p_\btheta(\hy)\right]
    \label{eq:lossy_code}
\end{equation}
where $\hy$ is quantized from continuous $\y$ and $\y$ is transformed from $\x$. Like \cite{Balle2018variational}, we relax the quantization of $\y$ by adding noise from $\mathcal{U}(-\frac12,\frac12)$, and assume $q_\bphi(\hy|\x)=\prod_{i,c} \mathcal{U}(y_{i,c}-\frac12,y_{i,c}+\frac12)$. Thus, $\log q_\bphi(\hy|\x)=0$ is dropped from \eqref{eq:lossy_code}. For simply lossy image compression, such as \cite{Balle2018variational,minnen2018nips,Choi2019iccv,cheng2020cvpr}, the first and second terms of \eqref{eq:lossy_code} are considered as the distortion loss and the rate loss, respectively. Only $\hy$ needs to be encoded.

Beyond lossy image compression, we further take residual compression into consideration.
For each $\x$ and all $(\tx,\r)$ pairs satisfying $\tx+\r=\x$, we have $p_\btheta(\x|\hy)=\sum_{\tx+\r=\x} p_\btheta(\tx,\r|\hy)\ge p_\btheta(\tx,\r|\hy)$.
Hence, we substitute $p_\btheta(\tx,\r|\hy)$ for $p_\btheta(\x|\hy)$, leading to the upper bound of \eqref{eq:lossy_code}.
Let $p_\btheta(\tx,\r|\hy)=p_\btheta(\tx|\hy)\cdot p_\btheta(\r|\tx,\hy)$, we have:
\begin{equation}
    \bbE_{p(\x)}\bbE_{q_\bphi(\hy|\x)}\left[-\cancelto{0}{\log p_\btheta(\tx|\hy)}\underbrace{-\log p_\btheta(\r|\tx,\hy)}_{R_\r}\underbrace{-\log p_\btheta(\hy)}_{R_\hy}\right]
    \label{eq:lossy_residue_code}
\end{equation}
The lossy reconstruction $\tx$ is computed by the inverse transform of $\hy$ and quantization. As the quantization is relaxed by adding uniform noise from $\mathcal{U}(-\frac12,\frac12)$, we have $\log p_\btheta(\tx|\hy)=0$. The second term $R_\r$ and the third term $R_\hy$ of \eqref{eq:lossy_residue_code} are the rates of encoding $\r$ and $\hy$, respectively.

Notice that no distortion loss of lossy image compression is specified in \eqref{eq:lossy_residue_code}. Therefore, we can embed arbitrary lossy image compressors and optimize \eqref{eq:lossy_residue_code} to achieve lossless image compression. A special case is the previous lossless image compression method \cite{mentzer2020cvpr}, in which the BPG lossy image compressor \cite{bpg} with a learned quantization parameter classifier optimizes $-\log p_\btheta(\hy)$ and a CNN-based residual compressor optimizes $-\log p_\btheta(\r|\tx)$.

\begin{figure*}[!t]
\centering
\includegraphics[width=0.99\linewidth]{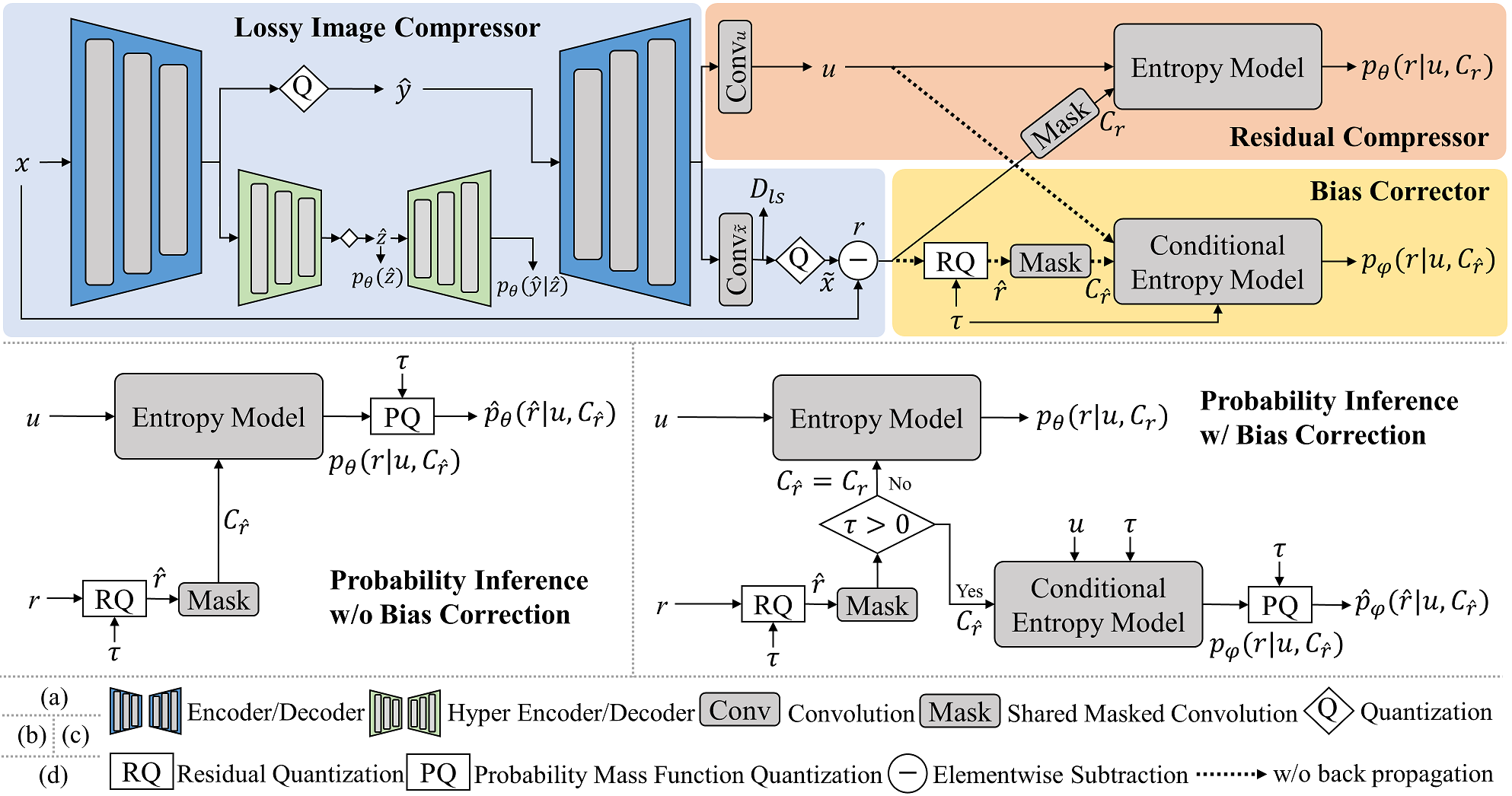}
\caption{(a) Overview of network architecture. (b) Probability inference of the quantized residual $\hr$ without bias correction. (c) Probability inference of the quantized residual $\hr$ with bias correction. (d) Legend with descriptions for notations in (a)-(c).}
\label{fig:network}
\end{figure*}

\begin{figure}[!t]
\centering
\subfloat[]{
\label{fig:em1}
\includegraphics[width=0.49\linewidth]{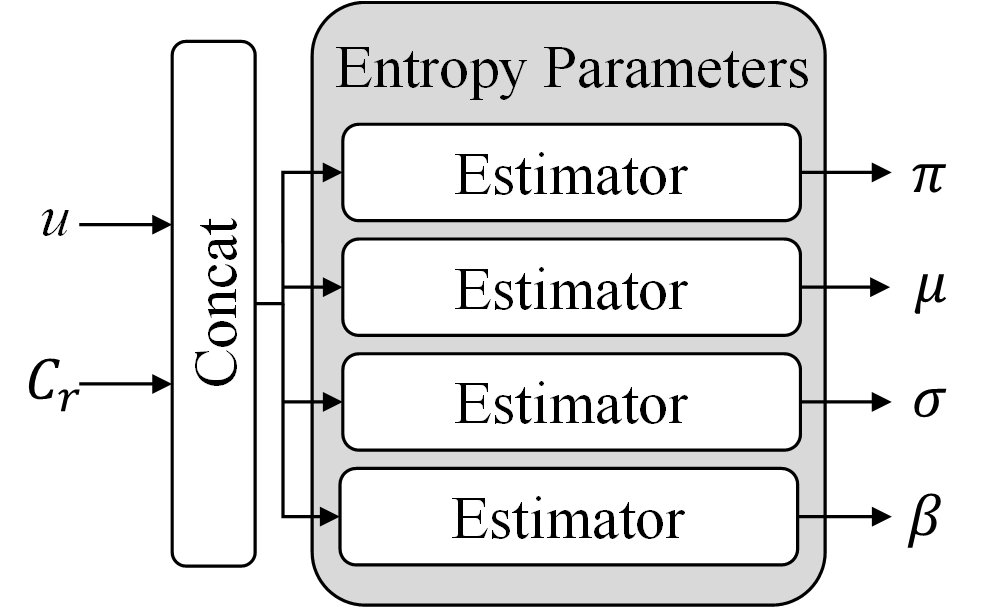}}
\subfloat[]{
\label{fig:em2}
\includegraphics[width=0.49\linewidth]{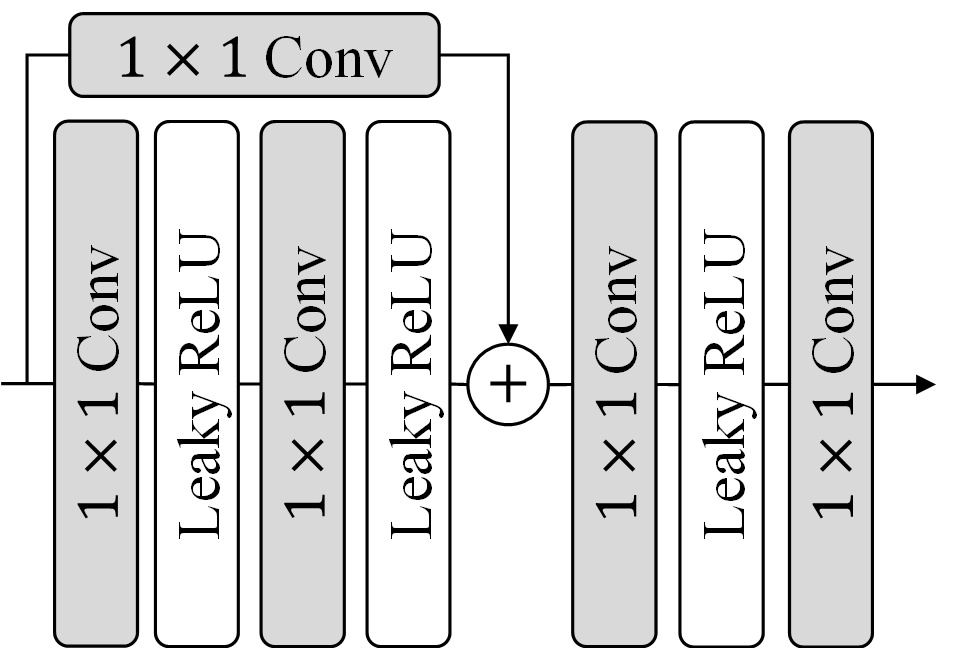}}
\caption{Entropy parameter estimation sub-network. Given $\u$ and $C_\r$, (a) estimates parameters of discrete logistic mixture likelihoods corresponding to \emph{entropy model} in Fig.\;\ref{fig:network}\red{(a)}. (b) is the detailed structure of \emph{estimator} in (a). All $1\times1$ convolutional layers except the last layer have 128 channels. The last convolutional layer has $3\cdot K$ channels.}
\label{fig:entropy_model}
\end{figure}

{\bfseries Network Design and Optimization.} To realize \eqref{eq:lossy_residue_code}, we propose our network architecture in Fig.\;\ref{fig:network}\red{(a)}. For lossy image compression, we employ the hyper-prior model proposed in \cite{Balle2018variational}, where side information $\hz$ is extracted to model the distribution of $\hy$. We employ the residual and attention blocks similar to \cite{cheng2020cvpr,guo2020cvprw} in the \emph{encoder} and \emph{decoder}. Detailed structure of the lossy image compressor is described in the supplementary material. $R_\hy$ is thus extended by
\begin{equation}
    R_{\hy,\hz}=\bbE_{p(\x)}\bbE_{q_\bphi(\hy,\hz|\x)}\left[-\log p_\btheta(\hy|\hz)-\log p_\btheta(\hz)\right]
    \label{eq:rate_y_z}
\end{equation}
where $R_{\hy,\hz}$ is the cost of encoding both $\hy$ and $\hz$.

Denote by $\u=g_\u(\hy)$, where the feature $\u$ is extracted from $\hy$ by $g_\u(\cdot)$. Denote by $\tx=g_\tx(\hy)$, where the lossy reconstruction $\tx$ is inversely transformed from $\hy$ by $g_\tx(\cdot)$. $g_\u$ and $g_\tx$ share the \emph{decoder}, and only the last convolutional layers are different, as shown in Fig.\;\ref{fig:network}\red{(a)}. We interpret $\u$ as the feature of the residual $\r$ given $\tx$ and $\hy$ ($\u$ shares the same height and width with $\r$ and has 64 channels). As $\r=\x-\tx$, we can compute $\r$ and further consider the causal context $C_\r$ of $\r$ during encoding. $R_\r$ is thus extended by
\begin{equation}
    R_\r=\bbE_{p(\x)}\bbE_{q_\bphi(\hy,\hz|\x)}\left[-\log p_\btheta(\r|\u, C_{\r})\right]
    \label{eq:rate_r}
\end{equation}
where the causal context $C_{r_{i}}\in C_\r$ is shared by $r_{i,c}$ of all channels. $C_{r_{i}}$ is extracted from $r_{<i,c}$ by a $5\times5$ masked convolutional layer \cite{minnen2018nips} with 64 channels.
$r_{<i,c}$ denote the residuals encoded before $r_{i,c}$.
For RGB images with three channels, we have
\begin{equation}
    p_\btheta(\r|\u, C_{\r})=\prod_i p_\btheta(r_{i,1},r_{i,2},r_{i,3}|u_i, C_{r_i})
    \label{eq:r_pmf}
\end{equation}

We further model the probability mass function (PMF) of the residual $\r$ with discrete logistic mixture likelihood \cite{pixelcnn_pp,Mentzer2019cvpr,mentzer2020cvpr}.
We use a channel autoregression over $r_{i,1},r_{i,2},r_{i,3}$ and reformulate $p_\btheta(r_{i,1},r_{i,2},r_{i,3}|u_i, C_{r_i})$ as
\begin{align}
    p_\btheta(r_{i,1},r_{i,2},r_{i,3}|u_i, C_{r_i})&= p_\btheta(r_{i,1}|u_i, C_{r_i})\cdot  \label{eq:r_chain_rule} \\
    p_\btheta(r_{i,2}|r_{i,1},u_i, C_{r_i}&)\cdot p_\btheta(r_{i,3}|r_{i,1},r_{i,2},u_i, C_{r_i})  \notag
\end{align}
As shown in Fig.\;\ref{fig:entropy_model}, we introduce a sub-network to estimate entropy parameters, including mixture weights $\pi_{i,c}^k$, means $\mu_{i,c}^k$, variances $\sigma_{i,c}^k$ and mixture coefficients $\beta_{i,t}$. We choose a mixture of $K=5$ logistic distributions.
$k$ denotes the index of the $k$-th logistic distribution. $t$ denotes the channel index of $\beta$.
The channel autoregression over $r_{i,1},r_{i,2},r_{i,3}$ is implemented by updating the means using:
\begin{align}
    \tmu_{i,1}^k=\mu_{i,1}^k,\quad \tmu_{i,2}^k=\mu_{i,2}^k+\beta_{i,1}\cdot r_{i,1}, \notag \\
    \tmu_{i,3}^k=\mu_{i,3}^k+\beta_{i,2}\cdot r_{i,1}+\beta_{i,3}\cdot r_{i,2}
    \label{eq:channel_ar}
\end{align}
With $\pi_{i,c}^k$, $\tmu_{i,c}^k$ and $\sigma_{i,c}^k$, we have
\begin{equation}
    p_\btheta(r_{i,c}|r_{i,<c}, u_i, C_{r_i})\sim\sum_{k=1}^{K} \pi_{i,c}^k \logistic(\tmu_{i,c}^k, \sigma_{i,c}^k)
    \label{eq:factorized_pr}
\end{equation}
where $\logistic(\cdot)$ is the logistic distribution. For discrete $r_{i,c}$, we evaluate $p_\btheta(r_{i,c}|r_{i,<c}, u_i, C_{r_i})$ as \cite{pixelcnn_pp,Mentzer2019cvpr,mentzer2020cvpr}:
\begin{equation}
    \sum_{k=1}^{K} \pi_{i,c}^k \left[S\left(\frac{r_{i,c}^{+}-\tmu_{i,c}^k}{\sigma_{i,c}^k}\right)-S\left(\frac{r_{i,c}^{-}-\tmu_{i,c}^k}{\sigma_{i,c}^k}\right)\right]
    \label{eq:r_eval}
\end{equation}
where $S(\cdot)$ denotes the sigmoid function. $r_{i,c}^{+}=r_{i,c}+0.5$ and $r_{i,c}^{-}=r_{i,c}-0.5$.

Besides rate terms $R_{\hy,\hz}$ and $R_\r$, we add a distortion term $D_{ls}(\x,\tx)=\bbE_{p(\x)}\bbE_{i,c}(x_{i,c}-\tilde{x}_{i,c})^2$ to minimize the mean square error (MSE) between raw image $\x$ and lossy reconstruction $\tx$. As discussed in \cite{Balle2018variational}, minimizing MSE loss is equivalent to learning a lossy image compressor that fits residual $\r$ to a zero-mean factorized Gaussian distribution. However, the discrepancy between the real distribution of $\r$ and the Gaussian distribution could be large. Thus, a more sophisticated entropy model is proposed to encode $\r$. The full loss function for learning near-lossless image compression with $\tau=0$, \ie, lossless image compression, is
\begin{equation}
    \mathcal{L}(\btheta, \bphi)=R_{\hy,\hz}+R_{\r}+\lambda\cdot D_{ls}
    \label{eq:loss_func}
\end{equation}
where $\lambda$ controls the ``rate-distortion'' trade-off. When $\lambda=0$, $\tx$ becomes a latent variable without any constraints. It leads to the best lossless compression performance but is not suitable for our near-lossless image compression with $\tau>0$.
We choose $\lambda=0.03$ empirically in our codec.
Experiments with different $\lambda$'s are conducted in Sec.\;\ref{subset:experiment_ablation}.

\subsection{Probability Inference of Quantized Residuals}
\begin{figure}[!t]
\centering
%
\includegraphics[width=0.96\linewidth]{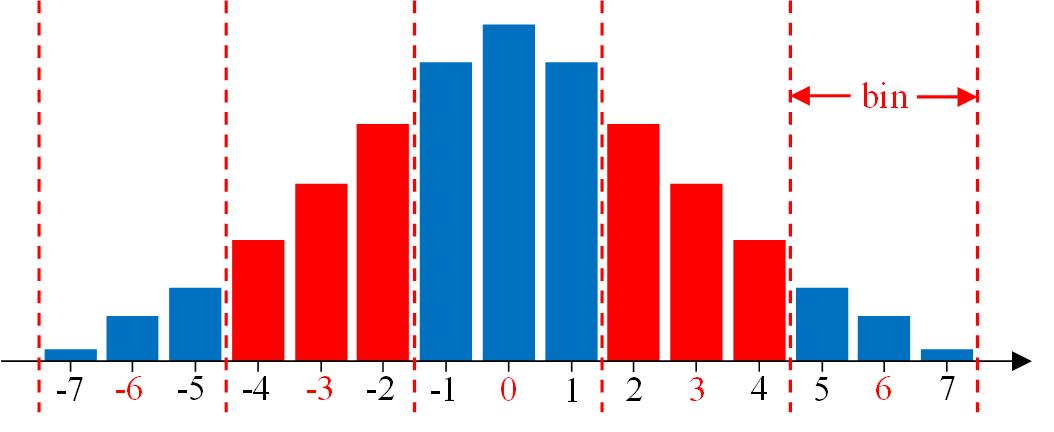}
\caption{PMF quantization corresponding to residual quantization \eqref{eq:r_quantization} with $\tau=1$. Each red number is the value of the quantized residual $\hat{r}_{i,c}$. The probability of each quantized value is the sum of the probabilities of the values in the same bin.}
\label{fig:r_quantization}
\end{figure}

We next propose a scheme to achieve \emph{scalable} near-lossless image compression with $\tau>0$.
For variable $\ell_\infty$ error bound $\tau$, we keep the lossy reconstruction $\tx$ fixed and quantize the residual $\r$ to $\hr$ by \eqref{eq:r_quantization}. Instead of training multiple networks for variably quantized $\hr$, we derive the probability model of $\hr$ from the learned probability model of $\r$ at $\tau=0$. The resulting cost of encoding $\hr$, denoted by $R^\tau_{\hr}$, is reduced significantly with the increase of $\tau$.

Given $\tau$ and the learned PMF $p_\btheta(r_{i,c}|r_{i,<c}, u_i, C_{r_i})$, $\hp_\btheta(\hat{r}_{i,c}|r_{i,<c}, u_i, C_{r_i})$ of quantized $\hat{r}_{i,c}$ can be computed by the following PMF quantization:
\begin{equation}
    \hp_\btheta(\hat{r}_{i,c}|r_{i,<c}, u_i, C_{r_i})=\sum_{\mathclap{v=\hat{r}_{i,c}-\tau}}^{\hat{r}_{i,c}+\tau}p_\btheta(v|r_{i,<c}, u_i, C_{r_i})
    \label{eq:p_rq_ideal}
\end{equation}
An illustrative example is shown in Fig.\;\ref{fig:r_quantization}.
Together with \eqref{eq:r_pmf} and \eqref{eq:r_chain_rule}, we can derive the probability model $\hp_\btheta(\hr|\u,C_\r)$ of $\hr$, which is ideal given the learned $p_\btheta(\r|\u,C_\r)$ of $\r$.

However, encoding $\hr$ with $\hp_\btheta(\hr|\u,C_\r)$ results in undecodable bitstreams, since the original residual $\r$ is unknown to the decoder. $\hp_\btheta(\hat{r}_{i,c}|r_{i,<c}, u_i, C_{r_i})$ cannot be evaluated without $r_{i,<c}$ and causal context $C_{r_i}$.
Instead, we evaluate PMF using the quantized residual $\hr$.
Because of the mismatch between training (with $\r$) and inference (with $\hr$) phases, it leads to a biased PMF $p_\btheta(r_{i,c}|\hat{r}_{i,<c}, u_i, C_{\hat{r}_i})$ of $r_{i,c}$.
We then evaluate $\hp_\btheta(\hat{r}_{i,c}|\hat{r}_{i,<c}, u_i, C_{\hat{r}_i})$ with \eqref{eq:p_rq_ideal} and derive $\hp_\btheta(\hr|\u,C_\hr)$ for the encoding of $\hr$.
The above probability inference scheme is sketched in Fig.\;\ref{fig:network}\red{(b)}.

\subsection{Bias Correction}
\begin{figure}[!t]
\centering
%
\includegraphics[width=0.96\linewidth]{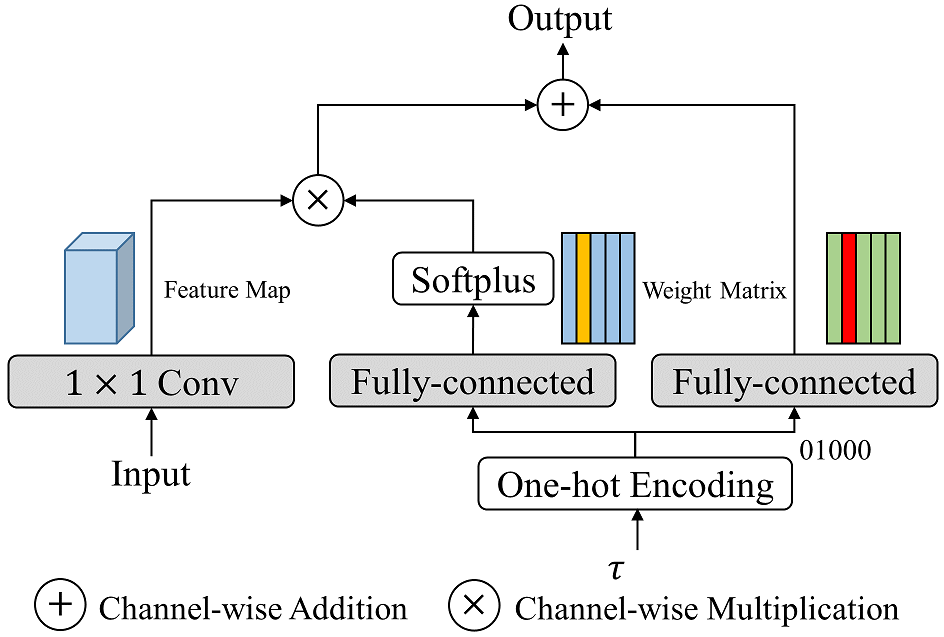}
\caption{Conditional convolutional layer for bias correction. Different outputs can be produced conditioned on $\tau\in\{1,2,\ldots,5\}$.}
\label{fig:conditional_cnn}
\end{figure}

Because of the potential discrepancy between the ideal $\hp_\btheta(\hr|\u,C_\r)$ and the biased $\hp_\btheta(\hr|\u,C_\hr)$, encoding $\hr$ with $\hp_\btheta(\hr|\u,C_\hr)$ degrades the compression performance.
We further propose a bias correction scheme to close the gap between the ideal $\hp_\btheta(\hr|\u,C_\r)$ and the biased $\hp_\btheta(\hr|\u,C_\hr)$, while the resulting bitstreams are still decodable.

The components of the proposed bias corrector are illustrated in Fig.\;\ref{fig:network}\red{(a)}. The masked convolutional layer in the bias corrector is shared with that in the residual compressor. The \emph{conditional entropy model} has the same structure as the \emph{entropy model} illustrated in Fig.\;\ref{fig:entropy_model}, but replaces the convolutional layers with the conditional convolutional layers \cite{pixelcnn,Choi2019iccv} illustrated in Fig.\;\ref{fig:conditional_cnn}.

During training, we generate random $\tau\in\{1,2,\ldots,N\}$ and quantize $\r$ to $\hr$ with \eqref{eq:r_quantization}. We choose $N=5$ in this paper. Given $\u$ and the extracted context $C_\hr$, we use the \emph{conditional entropy model} to estimate $-\log p_\bvarphi(\r|\u, C_{\hr})$ conditioned on different $\tau$, and minimize
\begin{equation}
    \mathcal{L}(\bvarphi)=\bbE_{p(\x)}\bbE_{q_\bphi(\hy,\hz|\x)}\left[\log \frac{p_\btheta(\r|\u, C_{\r}))}{p_\bvarphi(\r|\u, C_{\hr})}\right]
    \label{eq:relative_entropy}
\end{equation}
where $\bvarphi$ denote the parameters of the \emph{conditional entropy model}. $-\log p_\btheta(\r|\u, C_{\r})$ is estimated by the \emph{entropy model} in the residual compressor. $\mathcal{L}(\bvarphi)$ can be considered as an approximate KL-divergence or relative entropy \cite{info_theory} between $p_\btheta(\r|\u, C_{\r})$ and $p_\bvarphi(\r|\u, C_{\hr})$. Because the \emph{entropy model} receives the context $C_\r$ extracted from the original residual $\r$ and is only trained with $\tau=0$, $p_\btheta(\r|\u, C_{\r})$ approximates the true distribution $p(\r|\tx,\hy)$ better than $p_\bvarphi(\r|\u, C_{\hr})$. Thus, $-\log p_\btheta(\r|\u, C_{\r})$ is the lower bound of $-\log p_\bvarphi(\r|\u, C_{\hr})$ on average.

The probability inference scheme with bias correction is sketched in Fig.\;\ref{fig:network}\red{(c)}. For $\tau=0$, we choose the \emph{entropy model} to estimate $p_\btheta(\r|\u, C_{\r})$ to encode $\r$. For $\tau\in\{1,2,\ldots,5\}$, we choose the \emph{conditional entropy model} to estimate $p_\bvarphi(\r|\u, C_{\hr})$ and derive $\hp_\bvarphi(\hr|\u, C_{\hr})$ with \eqref{eq:p_rq_ideal} to encode $\hr$. As $\hp_\bvarphi(\hr|\u, C_{\hr})$ approximates the ideal $\hp_\btheta(\hr|\u,C_\r)$ better than the biased $\hp_\btheta(\hr|\u,C_\hr)$, the compression performance can be improved. Since evaluating $\hp_\bvarphi(\hr|\u, C_{\hr})$ is independent of $\r$, the resulting bitstreams are decodable. We analyze the efficacy of the proposed bias correction scheme in Sec.\;\ref{subset:experiment_ablation}.

{\bfseries Training Strategy.} Bias corrector is trained together with the lossy image compressor and the residual compressor, but minimizing \eqref{eq:relative_entropy} only updates the parameters of the \emph{conditional entropy model} as shown in Fig.\;\ref{fig:network}\red{(a)}. The masked convolutional layer is shared with that in the residual compressor, thus can be updated by minimizing \eqref{eq:loss_func}.
This leads to three advantages: 1) achieving the target \emph{conditional entropy model}; 2) circumventing the relaxation of residual quantization; 3) avoiding degrading the estimation of $p_\btheta(\r|\u, C_{\r})$ caused by training with mixed $\tau$.

\section{Experiments}
\label{sec:experiments}
\subsection{Experimental Setup}
\label{subsec:experiment_setup}
{\bfseries Training.} We train on DIV2K high resolution training dataset \cite{div2k} consisting of 800 2K resolution color images. Although DIV2K is originally built for image super-resolution task, it contains large numbers of high-quality images that is suitable for training our near-lossless codec. The images are first cropped into 27958 samples with the size of $256\times256$.
To augment training data and prevent overfitting on noise, we use bicubic method to downscale each sample with a random factor selected from $[0.6, 1.0]$ and further randomly crop the downscaled sample to $128\times128$ during training.
Our near-lossless codec is optimized for 400 epochs using Adam \cite{kingma2015adam} with minibatches of size $16$. The learning rate is set to $1\times 10^{-4}$  for the first 350 epochs and decays to $1\times 10^{-5}$ for the remaining 50 epochs.

{\bfseries Evaluation.} We evaluate our near-lossless codec on four datasets: 1) Kodak dataset \cite{kodak} consists of 24 uncompressed $768\times512$ color images. 2) DIV2K high resolution validation dataset \cite{div2k} consists of 100 2K resolution color images. 3) CLIC professional validation dataset (CLIC.p) \cite{clic} consists of 41 color images taken by professional photographers. 4) CLIC mobile validation dataset (CLIC.m) \cite{clic} consists of 61 color images taken using mobile phones. Images in CLIC.p and CLIC.m are typically 2K resolution but some of them are of small sizes.

\begin{table}[!tb]
\begin{center}
\small
\begin{tabular}{lC{3.7em}C{3.7em}C{3.7em}C{3.7em}}
\toprule[1pt]
  Codec & Kodak & DIV2K & CLIC.p & CLIC.m \tabularnewline
\hline
\hline
    PNG    & 4.35  & 4.23 & 3.93 & 3.93 \tabularnewline
    JPEG-LS   & 3.16  & 2.99 & 2.82 & 2.53 \tabularnewline
    CALIC   & 3.18  & 3.07 & 2.87 & 2.59 \tabularnewline
    JPEG2000    & 3.19  & 3.12 & 2.93 & 2.71 \tabularnewline
    WebP   & 3.18  & 3.11 & 2.90 & 2.73 \tabularnewline
    BPG    & 3.38  & 3.28 & 3.08 & 2.84 \tabularnewline
    FLIF   & \textbf{2.90}  & 2.91 & 2.72 & \textbf{2.48} \tabularnewline
    L3C    & $-$  & 3.09 & 2.94 & 2.64 \tabularnewline
    RC     & $-$  & 3.08 & 2.93 & 2.54 \tabularnewline
\hline
    Ours   & 3.04  & \textbf{2.81} & \textbf{2.66} & 2.51 \tabularnewline
\bottomrule[1pt]
\end{tabular}
\end{center}
\caption{Compression performance (bpsp) of the proposed near-lossless image codec with $\tau=0$, compared to other lossless image codecs on Kodak, DIV2K, CLIC.p and CLIC.m datasets.}
\label{tb:results_ll}
\end{table}

\begin{table}[!tb]
\begin{center}
\small
\begin{tabular}{lcC{3.4em}C{3.4em}C{3.4em}C{3.4em}}
\toprule[1pt]
  Codec & $\tau^*$ & Kodak & DIV2K & CLIC.p & CLIC.m \tabularnewline
\hline
\hline
    \multirow{3}{*}{\makecell{JPEG- \\LS}}  & 1  & 2.90  & 2.62 & 2.34 & 2.44 \tabularnewline
                                            & 2  & 2.30  & 2.07 & 1.80 & 1.89 \tabularnewline
                                            & 4  & 1.68  & 1.53 & 1.28 & 1.35 \tabularnewline
\hline
    \multirow{3}{*}{\makecell{CALIC}}       & 1  & 2.75  & 2.45 & 2.18 & 2.28 \tabularnewline
                                            & 2  & 2.14  & 1.88 & 1.62 & 1.70 \tabularnewline
                                            & 4  & 1.51  & 1.31 & 1.07 & 1.13 \tabularnewline
\hline
    \multirow{3}{*}{\makecell{WebP \\nll}}  & 1  & 2.41  & 2.45 & 2.26 & 2.11 \tabularnewline
                                            & 2  & 2.01  & 2.04 & 1.89 & 1.85 \tabularnewline
                                            & 4  & 1.82  & 1.83 & 1.73 & 1.75 \tabularnewline
\hline
    \multirow{3}{*}{\makecell{Ours \\w/ bc}}  & 1 & 1.84  & 1.72 & 1.60 & 1.53 \tabularnewline
                                              & 2 & 1.38  & 1.29 & 1.15 & 1.10 \tabularnewline
                                              & 4 & 0.92  & 0.89 & 0.74 & 0.72 \tabularnewline
\bottomrule[1pt]
\end{tabular}
\end{center}
\caption{Compression performance (bpsp) of the proposed near-lossless image codec (w/ bias correction) with $\tau>0$, compared to near-lossless JPEG-LS, near-lossless CALIC and near-lossless WebP on Kodak, DIV2K, CLIC.p and CLIC.m datasets. $^*$\textit{The error bounds of near-lossless WebP are powers of two.}}
\label{tb:results_nll}
\end{table}

\subsection{Performance Evaluation at $\tau=0$}
\label{subsec:experiment_ll}

We first evaluate the compression performance of the proposed near-lossless image codec at $\tau=0$, measured by \textit{bits per subpixel} (bpsp). Each RGB pixel has 3 subpixels. We compare with seven traditional lossless image codecs, \ie, PNG, JPEG-LS \cite{weinberger2000loco}, CALIC \cite{calic}, JPEG2000 \cite{skodras2001j2k}, WebP \cite{webp}, BPG \cite{bpg}, FLIF \cite{sneyers2016flif}, and two the state-of-the-art learned lossless image codecs, \ie, L3C \cite{Mentzer2019cvpr} and RC \cite{mentzer2020cvpr}. Unlike ours, L3C \cite{Mentzer2019cvpr} and RC \cite{mentzer2020cvpr} are trained on a much larger dataset (300,000 images from the Open Images dataset \cite{krasin2017openimages}). We report the compression performance of L3C \cite{Mentzer2019cvpr} and RC \cite{mentzer2020cvpr} published by their authors.

As reported in Table\;\ref{tb:results_ll}, our codec achieves the best performance on DIV2K validation dataset, which shares the same domain with the training dataset.
Our codec achieves the best performance on CLIC.p, and the second best performance on Kodak and CLIC.m, only secondary to FLIF.
The results demonstrate that the jointly learned lossy image compressor and residual compressor with $\tau=0$ can effectively model $p_\btheta(\x)$ and can be generalized to various domains of natural images. This provides a solid foundation for the near-lossless image compression with $\tau>0$.

\subsection{Performance Evaluation at $\tau>0$}
\label{subsec:experiment_nll}

We next evaluate the compression performance of the proposed near-lossless image codec (with bias correction) at $\tau>0$. We compare with near-lossless JPEG-LS \cite{weinberger2000loco}, near-lossless CALIC \cite{Wu2000nll} and near-lossless WebP (WebP nll) \cite{webp}. Besides, we compare with six traditional lossy image codecs, \ie, JPEG \cite{wallace1992jpeg}, JPEG2000 \cite{skodras2001j2k}, WebP \cite{webp}, BPG \cite{bpg}, lossy FLIF \cite{sneyers2016flif} and VVC \cite{vvc}.
Recent learned lossy image codecs published online are trained at relatively low bit rates ($\le 2$ bpp $\approx$ $0.67$ bpsp on Kodak).
We re-implement and train two recent learned lossy image codecs, \ie, Ball\'{e}-hyper \cite{Balle2018variational} and Minnen-joint \cite{minnen2018nips}, at high bit rates ($\ge0.8$ bpsp on Kodak) for comparison.

\begin{figure}[!t]
\centering
\subfloat[]{
\label{fig:rd_error}
\includegraphics[width=0.49\linewidth]{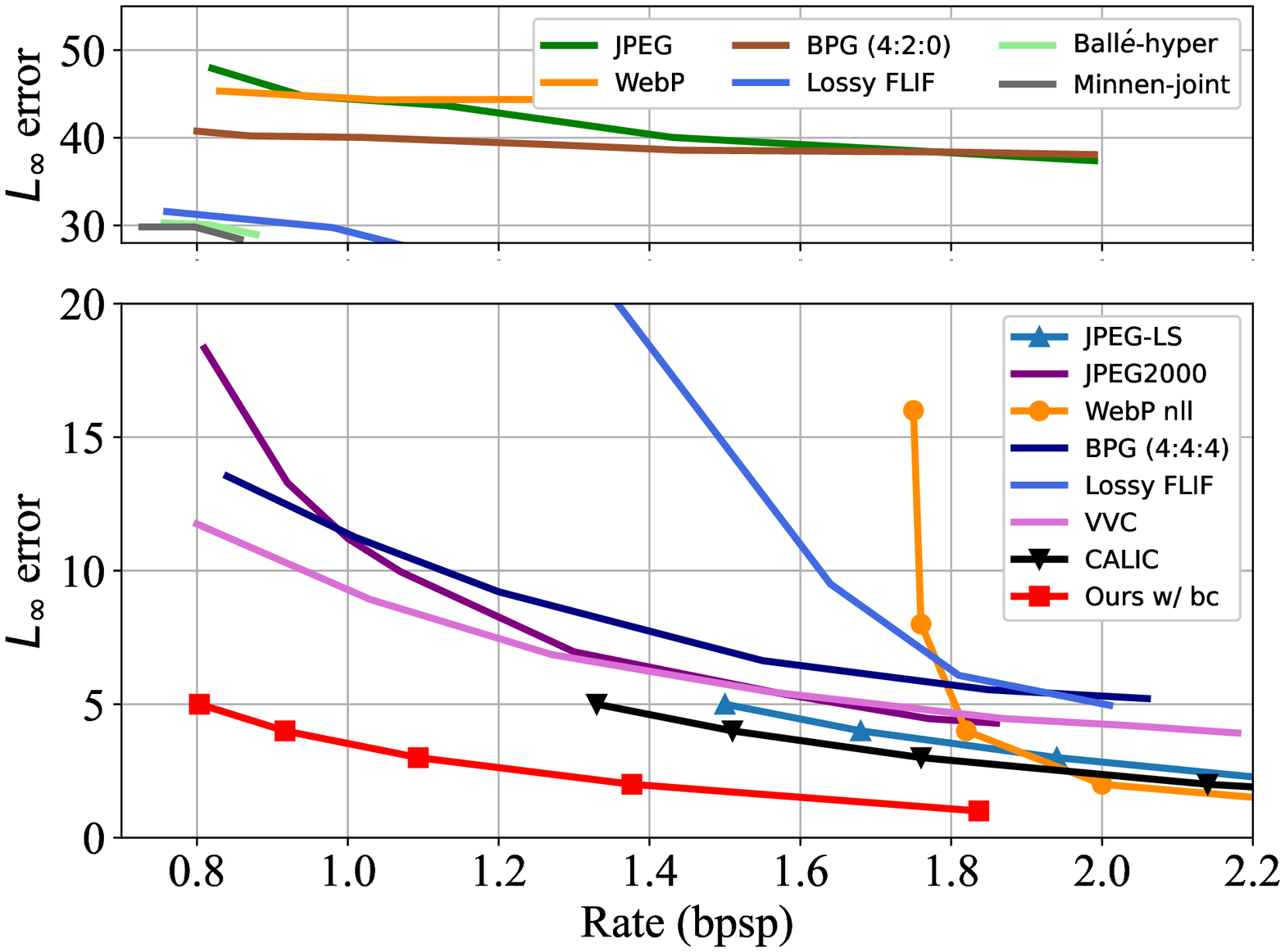}}
\subfloat[]{
\label{fig:rd_psnr}
\includegraphics[width=0.49\linewidth]{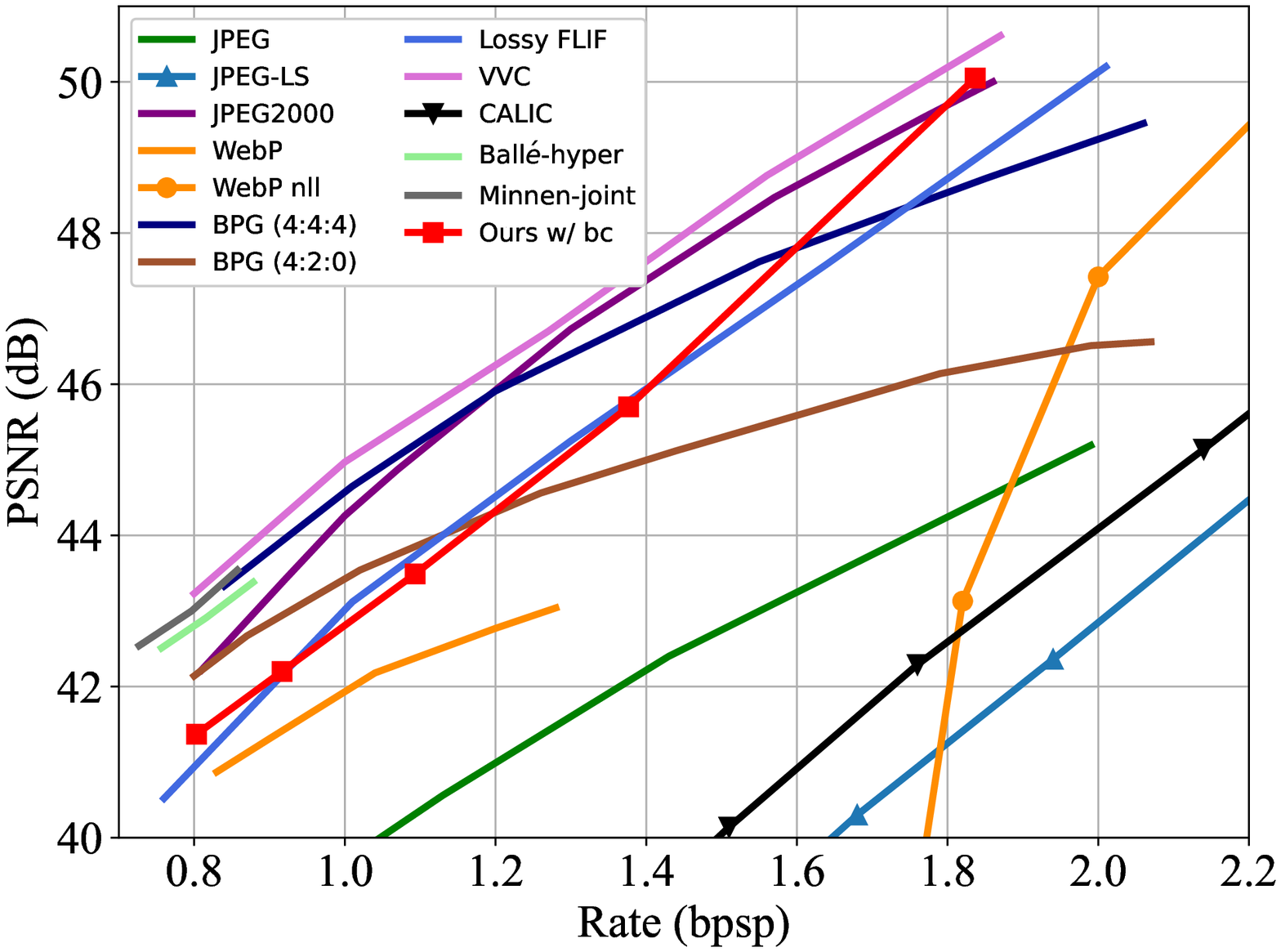}}
\caption{Rate-distortion curves of the proposed near-lossless image codec (w/ bias correction) with $\tau>0$ compared to other near-lossless image codecs and lossy image codecs on Kodak dataset. (a) bpsp-$\ell_\infty$ error curves. (b) bpsp-PSNR curves. }
\label{fig:results_rd}
\end{figure}

\begin{figure}[!t]
\centering
%
\includegraphics[width=0.9\linewidth]{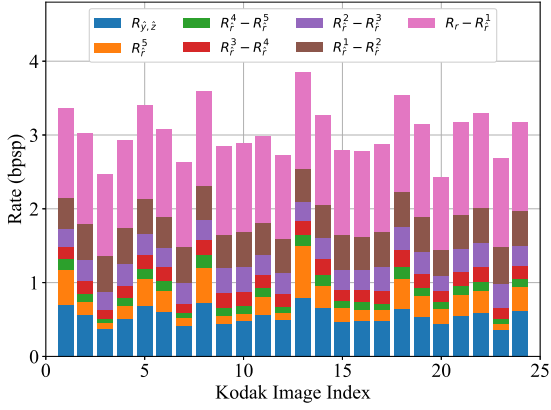}
\caption{Bit rates of each image compressed by the proposed near-lossless image codec with $\tau=\{0,1,\ldots,5\}$ on Kodak dataset.}
\label{fig:rate_ablation}
\end{figure}

Near-lossless WebP simply adjusts pixel values to the $\ell_\infty$ error bound $\tau$ and compresses the pre-processed images losslessly.
Near-lossless JPEG-LS and CALIC adopt predictive coding schemes, and encode the prediction residuals quantized by \eqref{eq:r_quantization}. The predictors and the probability models used in near-lossless JPEG-LS and CALIC are hand-crafted, which are not effective enough.
More effectively, our near-lossless codec is based on jointly trained lossy image compressor and residual compressor. We employ \eqref{eq:r_quantization} to realize the error bound $\tau$ and the probability models of the quantized residuals can be derived from the learned residual compressor with $\tau=0$. Therefore, our codec outperforms near-lossless JPEG-LS, CALIC and WebP by a wide margin, as shown in Table\;\ref{tb:results_nll}.

The rate-distortion performance of all codecs on Kodak dataset is reported in Fig.\;\ref{fig:results_rd}.
In terms of $\ell_\infty$ error defined in \eqref{eq:l_infty}, our codec consistently yields the best results among all codecs, as shown in Fig.\;\subref*{fig:rd_error}.
Besides $\ell_\infty$ error, we also compare PSNR of all codecs, as shown in Fig.\;\subref*{fig:rd_psnr}.
Our codec is better than or comparable with JPEG, near-lossless JPEG-LS, near-lossless CALIC, WebP, near-lossless WebP, BPG (4:2:0) and lossy FLIF.
Our codec is surpassed by JPEG2000, BPG (4:4:4), VVC, Ball\'{e}-hyper and Minnen-joint.
Overall, our codec achieves competitive performance at bit rates higher than $0.8$ bpsp, although PSNR is not our optimization objective.

\subsection{Ablation Study}
\label{subset:experiment_ablation}

{\bfseries Scalability.} In Fig.\;\ref{fig:rate_ablation}, we show the bit rates of each image compressed by the proposed near-lossless image codec with $\tau=\{0,1,\ldots,5\}$ on Kodak dataset. $R_{\hy,\hz}$, on average, accounts for about $18\%$ of $R_{\hy,\hz}+R_\r$ at $\tau=0$. With the increase of $\tau$, the bit rate $R^{\tau}_\hr$ of the quantized residual $\hr$ is significantly reduced.
Especially the $\tau=1$ mode saves about $40\%$ bit rates compared with the $\tau=0$ lossless image compression mode.

\begin{figure}[!t]
\centering
%
\includegraphics[width=0.99\linewidth]{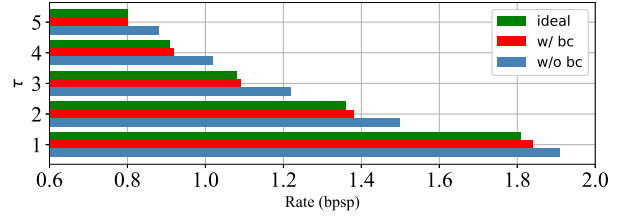}
\caption{Compression performance of the proposed near-lossless image codec without bias correction, with bias correction and with the ideal $\hp_\btheta(\hr|\u,C_\r)$ on Kodak dataset.}
\label{fig:bias_correction}
\end{figure}

\begin{table}[!tb]
\begin{center}
\small
\begin{tabular}{lcC{3em}C{3em}cc}
\toprule[1pt]
  $\lambda$ & Total Rate & $R_{\hy,\hz}$ & $R_\r$ & $\tx$ (PSNR) \tabularnewline
\hline
\hline
     0     & 2.93  & 0.05 & 2.88 & 24.09 \tabularnewline
     0.01  & 3.00  & 0.42 & 2.58 & 37.55 \tabularnewline
     0.03  & 3.04  & 0.55 & 2.49 & 39.29 \tabularnewline
     0.05  & 3.06  & 0.62 & 2.44 & 39.91 \tabularnewline
\bottomrule[1pt]
\end{tabular}
\end{center}
\caption{The effects of different $\lambda$'s on the learned near-lossless image codec with $\tau=0$ on Kodak dataset.}
\label{tb:results_lambda}
\end{table}

\begin{figure*}[!t]
\begin{center}
\subfloat[Raw Image]{
\label{fig:p21_org}
\includegraphics[width=0.195\linewidth]{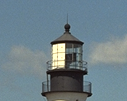}}
\subfloat[lossy $\tx$ ($\lambda=0$)]{
\label{fig:p21_0_ls}
\includegraphics[width=0.195\linewidth]{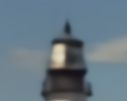}}
\subfloat[lossy $\tx$ ($\lambda=0.03$)]{
\label{fig:p21_3_1s}
\includegraphics[width=0.195\linewidth]{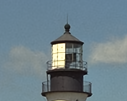}}
\subfloat[nll $\hx$ ($\lambda,\tau=0,5$)]{
\label{fig:p21_0_5}
\includegraphics[width=0.195\linewidth]{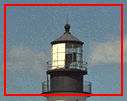}}
\subfloat[nll $\hx$ ($\lambda,\tau=0.03,5$)]{
\label{fig:p21_3_5}
\includegraphics[width=0.195\linewidth]{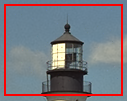}} \\
\subfloat[Raw Image]{
\label{fig:p09_org}
\includegraphics[width=0.195\linewidth]{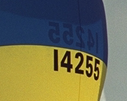}}
\subfloat[lossy $\tx$ ($\lambda=0.01$)]{
\label{fig:p09_1_ls}
\includegraphics[width=0.195\linewidth]{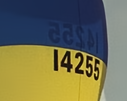}}
\subfloat[lossy $\tx$ ($\lambda=0.03$)]{
\label{fig:p09_3_1s}
\includegraphics[width=0.195\linewidth]{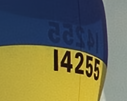}}
\subfloat[nll $\hx$ ($\lambda,\tau=0.01,5$)]{
\label{fig:p09_1_5}
\includegraphics[width=0.195\linewidth]{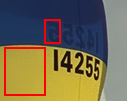}}
\subfloat[nll $\hx$ ($\lambda,\tau=0.03,5$)]{
\label{fig:p09_3_5}
\includegraphics[width=0.195\linewidth]{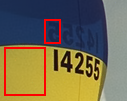}}
\end{center}
\caption{Visualizing the effects of different $\lambda$'s on the lossy reconstruction $\tx$ and the near-lossless (nll) reconstruction $\hx$. \itshape{The images are best viewed in full size on screen}. }
\label{fig:resulting_patches}
\end{figure*}

{\bfseries Bias Correction.}
In Fig.\;\ref{fig:bias_correction}, we demonstrate the efficacy of bias correction at $\tau>0$. Because of the discrepancy between the ideal $\hp_\btheta(\hr|\u,C_\r)$ and the biased $\hp_\btheta(\hr|\u,C_\hr)$, encoding $\hr$ with $\hp_\btheta(\hr|\u,C_\hr)$ (without bias correction) degrades the compression performance.
Instead, we encode $\hr$ with $\hp_\bvarphi(\hr|\u, C_{\hr})$ (with bias correction) resulting in lower bit rates.
With the increase of $\tau$, the PMF of $\r$ is quantized by larger bins and becomes coarser. Thus, the compression performance with bias correction approaches the ``ideal''.
However, the gap between the compression performance without bias correction and the ``ideal'' remains large, as the discrepancy between $\hp_\btheta(\hr|\u,C_\r)$ and $\hp_\btheta(\hr|\u,C_\hr)$ is also magnified with the increasing $\tau$. Note that the compression performance of our codec without bias correction is also better than near-lossless JPEG-LS, CALIC and WebP.

{\bfseries Discussion on different $\lambda$'s.} The $\lambda$ in \eqref{eq:loss_func} adapts the rate of losslessly encoding the raw image $\x$ and the distortion of the lossy reconstruction $\tx$. In Table\;\ref{tb:results_lambda}, we evaluate the effects of different $\lambda\in\{0,0.01,0.03,0.05\}$ on Kodak dataset.
With the increase of $\lambda$, the PSNR of $\tx$ is improved but the bpsp of encoding $\x$ also becomes higher. If we simply do lossless image compression ($\tau=0$), we should choose $\lambda=0$. The $\tx$ is learned by the network without any hand-crafted constraints, leading to the best compression performance.

However, $\lambda=0$ is not suitable for our near-lossless codec with $\tau>0$. As the average PSNR of the $\tx$ is $24.09$ dB, the magnitude of $\r=\x-\tx$ is large. To quantize such large residuals with large $\tau$ results in blocking artifacts in the near-lossless reconstruction $\hx$. As shown in Fig.\;\subref*{fig:p21_0_ls}, $\tx$ with $\lambda=0$ is blurry. If $\tau=5$, the quantization bin size is large, \ie, $2\tau+1=11$, resulting in blocking artifacts in $\hx$ in Fig.\;\subref*{fig:p21_0_5}. For $\lambda=0.01$, we also observe small artifacts in $\hx$ caused by quantization with large $\tau$, as shown in Fig.\;\subref*{fig:p09_1_5}. In our codec, we choose $\lambda=0.03$, which is the best trade-off between compression performance and visual quality in the experiments. As shown in Fig.\;\subref*{fig:p21_3_5} and Fig.\;\subref*{fig:p09_3_5}, we observe no artifacts in $\hx$ compared with $\lambda=0,\;0.01$.

\section{Conclusion}
\label{sec:conclusion}
In this paper, we propose a joint lossy image and residual compression framework to learn a scalable $\ell_\infty$-constrained near-lossless image codec.
With our codec, a raw image is first compressed by the lossy image compressor, and the corresponding residual is then quantized to satisfy variable $\ell_\infty$ error bounds.
To compress the variably quantized residuals, the probability models of the quantized residuals are derived from the learned probability model of the original residual, which is realized by the residual compressor in conjunction with the bias corrector.
Experimental results demonstrate the state-of-the-art performance of our near-lossless image codec.

\section*{Acknowledgement}
\small{This work was supported by National Natural Science Foundation of China under Grants 61922027, 61827804 and U20B2052, National Key Research and Development Project under Grant 2019YFE0109600, China Postdoctoral Science Foundation under Grant 2020M682826.}

\onecolumn{
\appendix
\section{Lossy Image Compressor Architecture}
As shown in Fig.\;\ref{fig:lossy_image_compressor}, we employ the hyper-prior model proposed in \cite{Balle2018variational}, where side information $\hz$ is extracted to model the distribution of $\hy$. We employ the residual and attention blocks to model analysis/synthesis transforms, similar to \cite{cheng2020cvpr,guo2020cvprw}.

\begin{figure}[htb]
\centering
\includegraphics[width=0.95\linewidth]{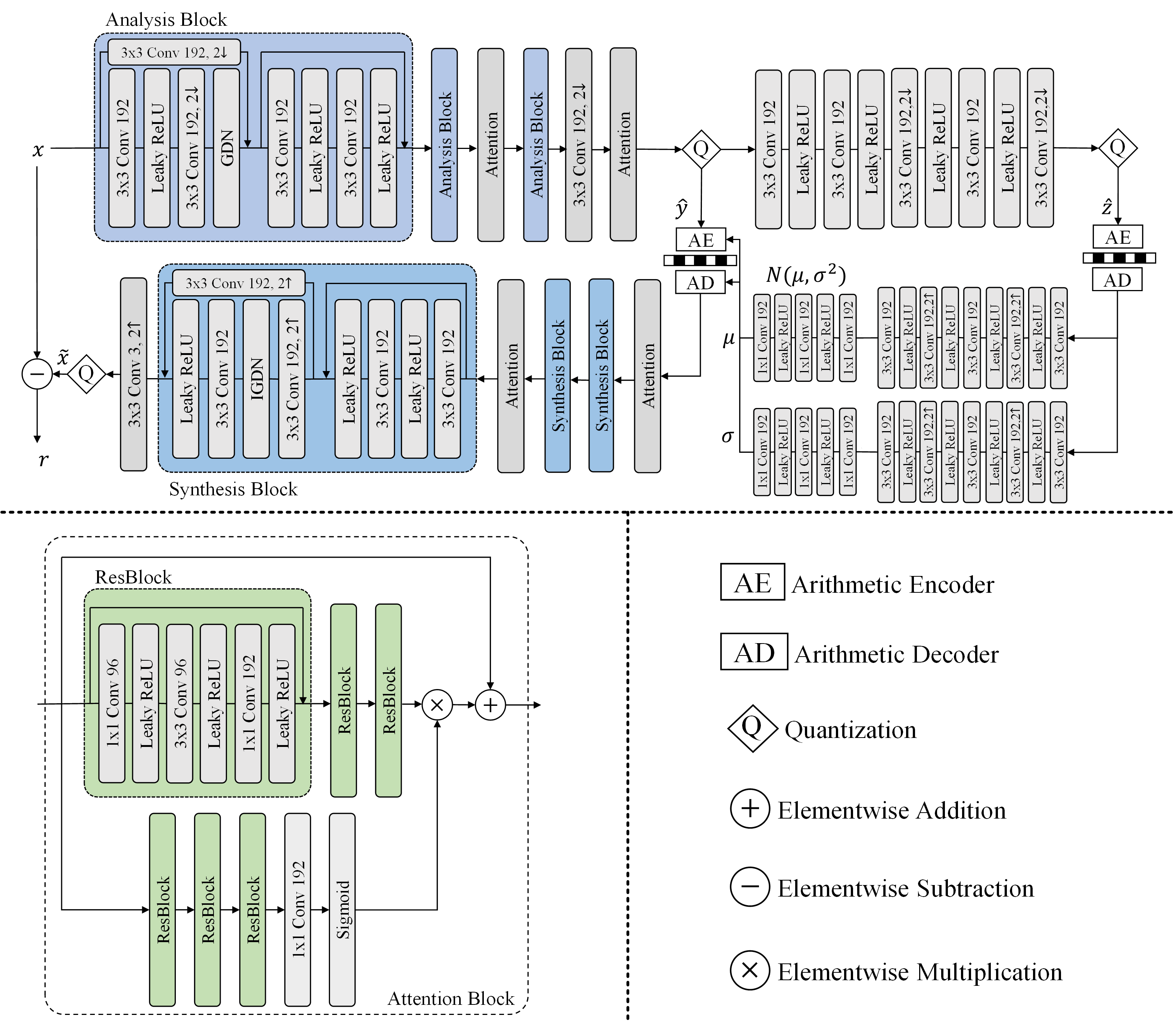}
\caption{Lossy image compressor architecture. Top: overview. Bottom left: attention block. Bottom right: legend.}
\label{fig:lossy_image_compressor}
\end{figure}

\section{Proof of Entropy Reduction Resulting from Uniform Quantization}
\begin{proposition}
    Let $x\in\{a,a+1,\ldots,a+N\}$ be a discrete random variable with probability mass function (PMF) $p(x)$, where $a$ and $N\ge0$ are two integers. Suppose that $x$ is quantized to $\hat{x}$ by
    \begin{equation}
        \hat{x}=\sgn(x)(2\tau+1)\lfloor(|x|+\tau)/(2\tau+1)\rfloor,\;\; \tau=0,1,\ldots,M
        \label{eq:x_quantization}
    \end{equation}
    where $\sgn(\cdot)$ denotes the sign function and $M\ge0$ is an integer. The PMF $\hp(\hat{x})$ of the quantized $\hat{x}$ is
    \begin{equation}
        \hp(\hat{x})=\sum_{x=\hat{x}-\tau}^{\hat{x}+\tau}p(x)
        \label{eq:pmf_quantization}
    \end{equation}
    Let $H(x)$ and $H(\hat{x})$ be the entropy of $x$ and $\hat{x}$ respectively, we have
     \begin{equation}
        H(x)\ge H(\hat{x})
    \end{equation}
    In words, uniformly quantizing $x$ with \eqref{eq:x_quantization} leads to entropy reduction.
    \label{prop:quantization}
\end{proposition}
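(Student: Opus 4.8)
The plan is to exploit the fact that the quantizer \eqref{eq:x_quantization} is a \emph{deterministic}, many-to-one map, so that the claim becomes an instance of the standard principle that entropy cannot increase under a deterministic function. First I would observe that \eqref{eq:x_quantization} partitions the domain $\{a,a+1,\ldots,a+N\}$ into disjoint bins of $2\tau+1$ consecutive integers, each collapsing to a single value $\hat{x}$, and that the PMF \eqref{eq:pmf_quantization} is exactly the total mass $p$ assigns to the bin of $\hat{x}$. Writing $f$ for this map so that $\hat{x}=f(x)$, the point is simply that $\hat{x}$ is a deterministic function of $x$.

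Given this, I would prefer a short, self-contained algebraic argument over invoking information-theoretic machinery. Expanding the difference of entropies and regrouping the sum over $x$ according to which bin $x$ falls into, I would write
\begin{equation}
    H(x)-H(\hat{x})=\sum_{\hat{x}}\ \sum_{x:f(x)=\hat{x}} p(x)\log\frac{\hp(\hat{x})}{p(x)}.
\end{equation}
Since $\hp(\hat{x})=\sum_{x:f(x)=\hat{x}}p(x)\ge p(x)$ for every $x$ in the bin, each logarithm is nonnegative, hence every summand is nonnegative and $H(x)\ge H(\hat{x})$ follows. Equality holds precisely when each bin carries at most one value of positive probability, i.e.\ when the quantization merges no mass.

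As an alternative I could appeal to the chain rule $H(x,\hat{x})=H(x)+H(\hat{x}\mid x)=H(\hat{x})+H(x\mid\hat{x})$: because $\hat{x}=f(x)$ we have $H(\hat{x}\mid x)=0$, so $H(x)=H(\hat{x})+H(x\mid\hat{x})\ge H(\hat{x})$ by nonnegativity of conditional entropy. I expect no serious obstacle here, since the statement is a classical one; the only points requiring care are the bookkeeping that the bins genuinely partition the domain (so that regrouping the entropy sum over $\hat{x}$ is valid) and adopting the convention $0\log 0=0$ so that zero-probability outcomes do not disrupt the per-bin inequality.
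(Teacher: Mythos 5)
Your main argument is correct and is essentially the paper's own proof: both regroup the entropy sum over $x$ into bins indexed by $\hat{x}$ and use the per-bin inequality $\hp(\hat{x})\ge p(x)$, with your version merely packaging the paper's chain of inequalities as a single nonnegative difference $\sum_{\hat{x}}\sum_{x:f(x)=\hat{x}}p(x)\log\bigl(\hp(\hat{x})/p(x)\bigr)$. Your added remarks (the chain-rule alternative via $H(\hat{x}\mid x)=0$, the equality condition, and the $0\log 0=0$ convention) are all sound but not needed beyond what the paper does.
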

\begin{proof}
    Based on the definition of entropy and \eqref{eq:pmf_quantization}, we have
    \begin{align}
        H(x)&=-\sum_x p(x)\log p(x)\\
        &=-\sum_{\hat{x}}\sum_{x=\hat{x}-\tau}^{\hat{x}+\tau}p(x)\log p(x) \\
        &\ge-\sum_{\hat{x}}\sum_{x=\hat{x}-\tau}^{\hat{x}+\tau}p(x)\log \hp(\hat{x}) \label{eq:ge_proof}\\
        &=-\sum_{\hat{x}}\hp(\hat{x})\log \hp(\hat{x})= H(\hat{x})
    \end{align}
    Inequation \eqref{eq:ge_proof} holds, because $\hp(\hat{x})\ge p(x)$ for $x\in[\hat{x}-\tau,\hat{x}+\tau]$.
\end{proof}
Proposition\;\ref{prop:quantization} is the theoretical foundation for our scalable compression scheme corresponding to residual quantization.

\section{More Visual Results}
We show more visual results of the near-lossless reconstructions of our codec on Kodak dataset \cite{kodak} in Fig.\;\ref{fig:kodim09}, \ref{fig:kodim13}, \ref{fig:kodim15} and \ref{fig:kodim21}. Compared with the raw images, no visual artifacts are introduced by residual quantization. Human eyes can hardly differentiate between the the raw images and the near-lossless reconstructions of our codec at $\tau\le5$.

\begin{figure}[!t]
\begin{center}
\subfloat[Raw image / $\lambda,\tau=0.03,0$]{
\label{fig:k09_0}
\includegraphics[width=0.3\linewidth]{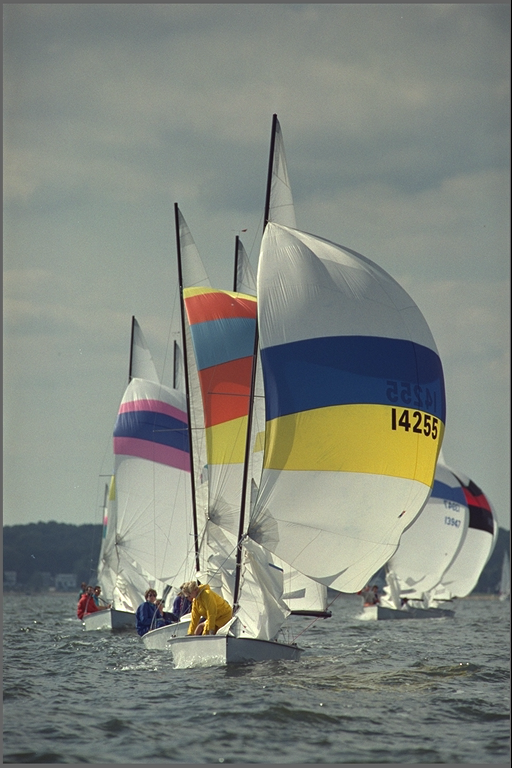}}
\subfloat[$\lambda,\tau=0.03,1$]{
\label{fig:k09_1}
\includegraphics[width=0.3\linewidth]{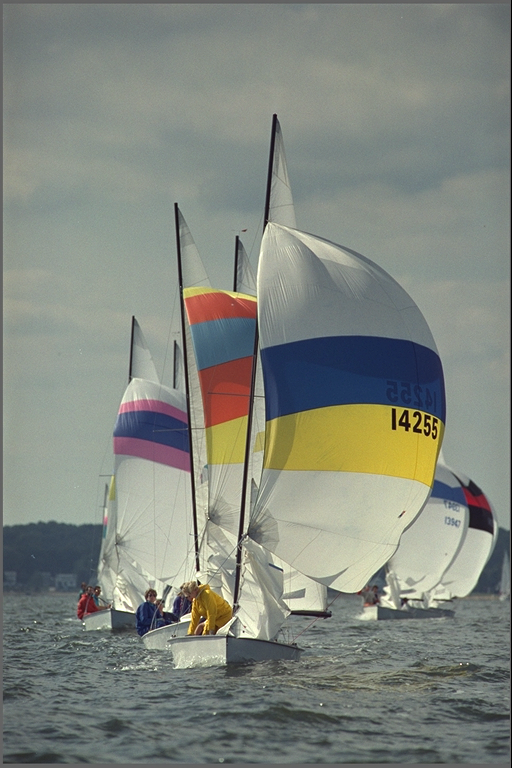}}
\subfloat[$\lambda,\tau=0.03,2$]{
\label{fig:k09_2}
\includegraphics[width=0.3\linewidth]{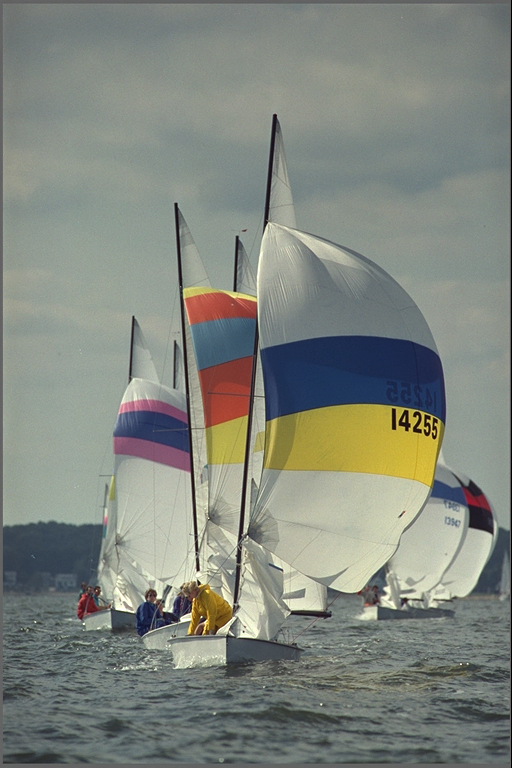}}\\
\subfloat[$\lambda,\tau=0.03,3$]{
\label{fig:k09_3}
\includegraphics[width=0.3\linewidth]{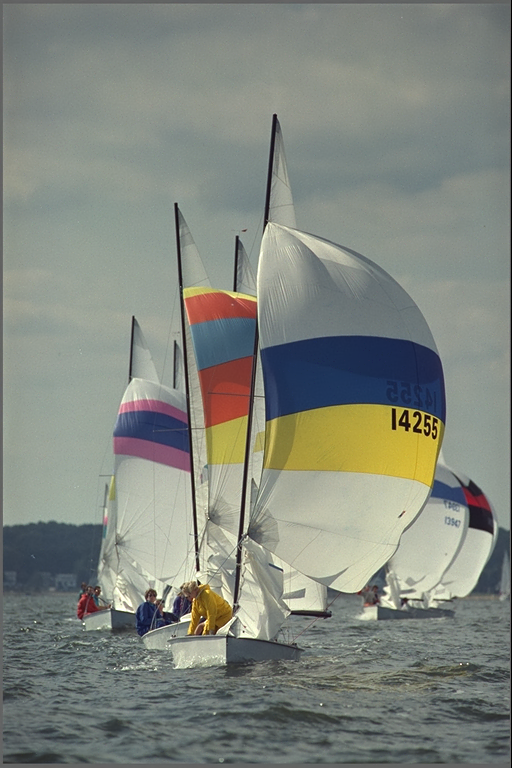}}
\subfloat[$\lambda,\tau=0.03,4$]{
\label{fig:k09_4}
\includegraphics[width=0.3\linewidth]{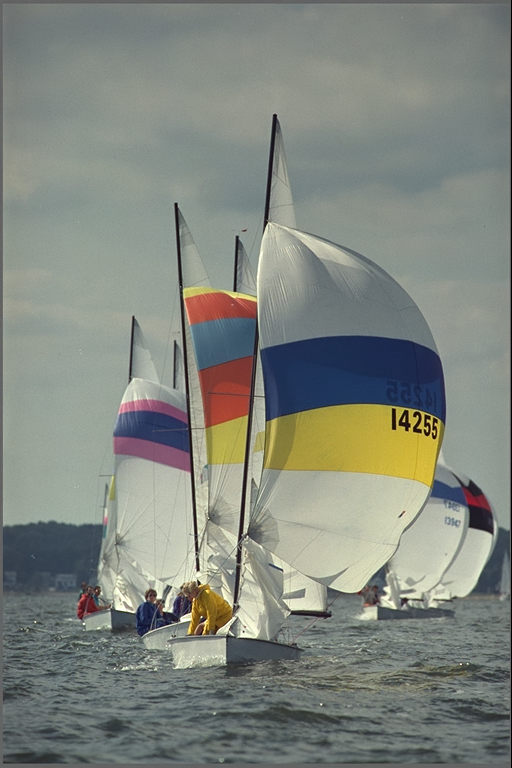}}
\subfloat[$\lambda,\tau=0.03,5$]{
\label{fig:k09_5}
\includegraphics[width=0.3\linewidth]{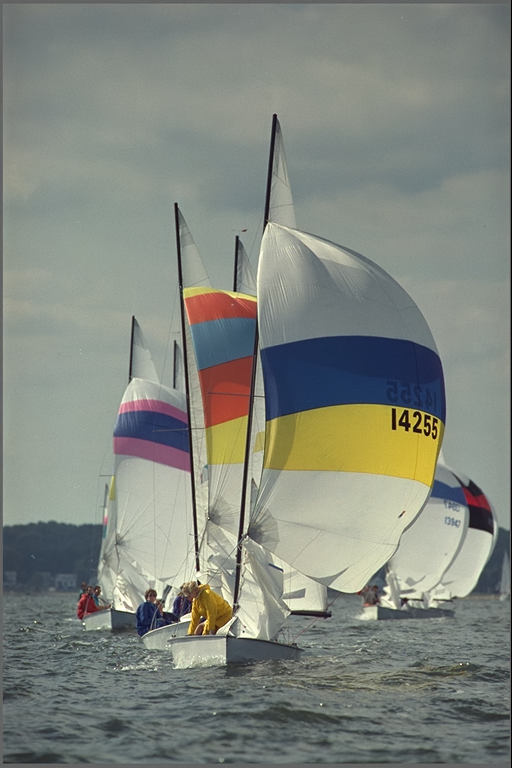}}
\end{center}
\caption{\textit{Kodim09}. Near-lossless image codec is trained with $\lambda=0.03$. Visualization of near-lossless reconstructions with different $\tau$.}
\label{fig:kodim09}
\end{figure}

\begin{figure}[!t]
\begin{center}
\subfloat[Raw image / $\lambda,\tau=0.03,0$]{
\label{fig:k13_0}
\includegraphics[width=0.45\linewidth]{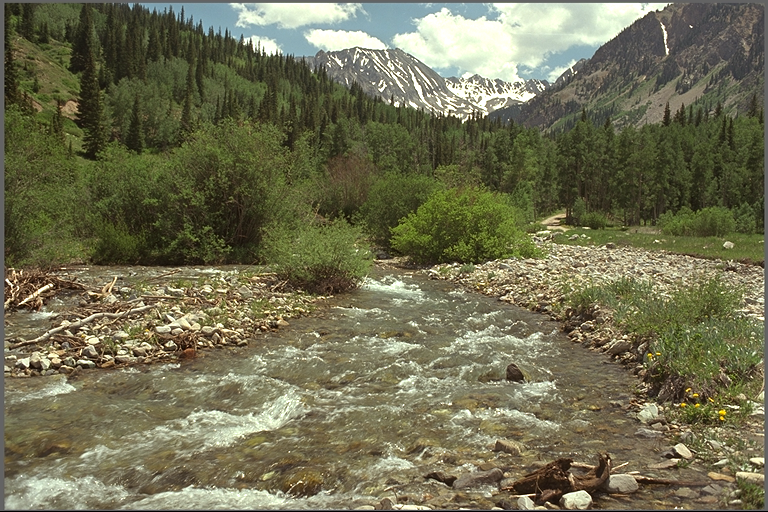}}
\subfloat[$\lambda,\tau=0.03,1$]{
\label{fig:k13_1}
\includegraphics[width=0.45\linewidth]{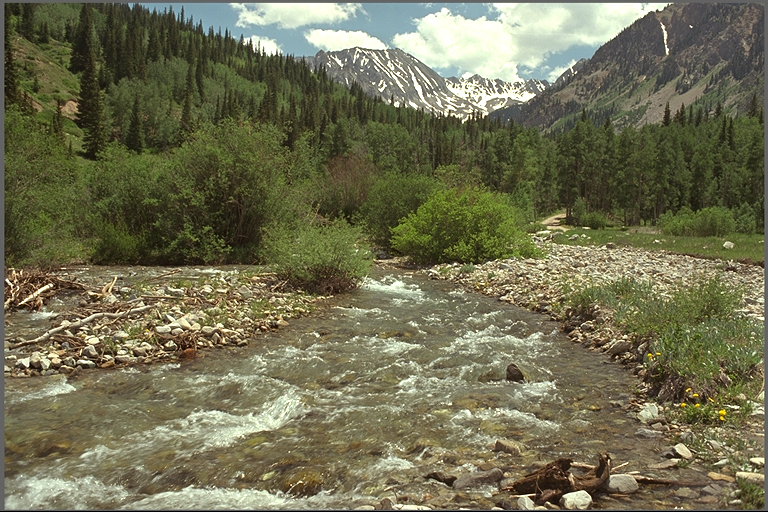}}\\
\subfloat[$\lambda,\tau=0.03,2$]{
\label{fig:k13_2}
\includegraphics[width=0.45\linewidth]{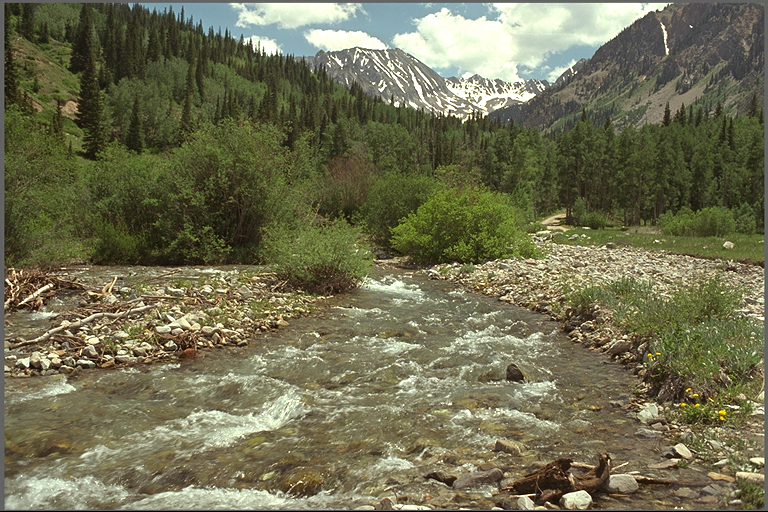}}
\subfloat[$\lambda,\tau=0.03,3$]{
\label{fig:k13_3}
\includegraphics[width=0.45\linewidth]{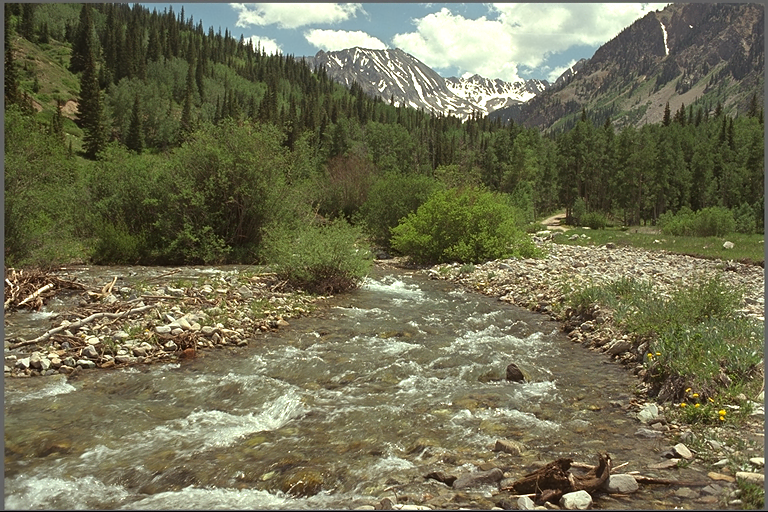}}\\
\subfloat[$\lambda,\tau=0.03,4$]{
\label{fig:k13_4}
\includegraphics[width=0.45\linewidth]{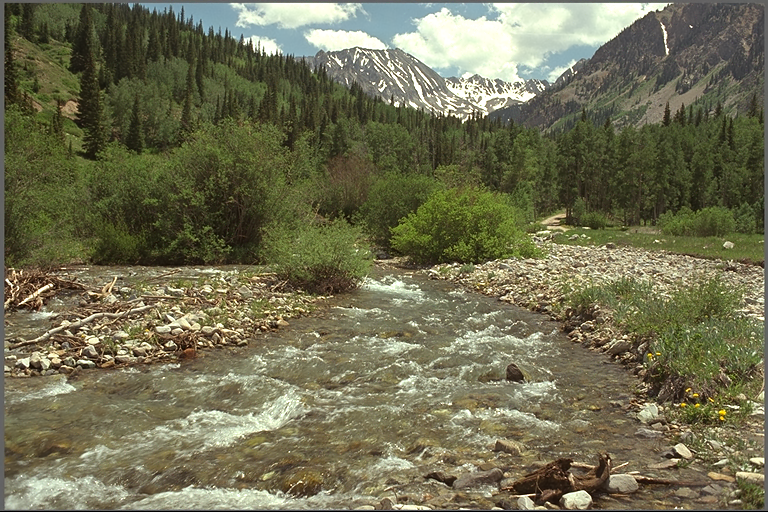}}
\subfloat[$\lambda,\tau=0.03,5$]{
\label{fig:k13_5}
\includegraphics[width=0.45\linewidth]{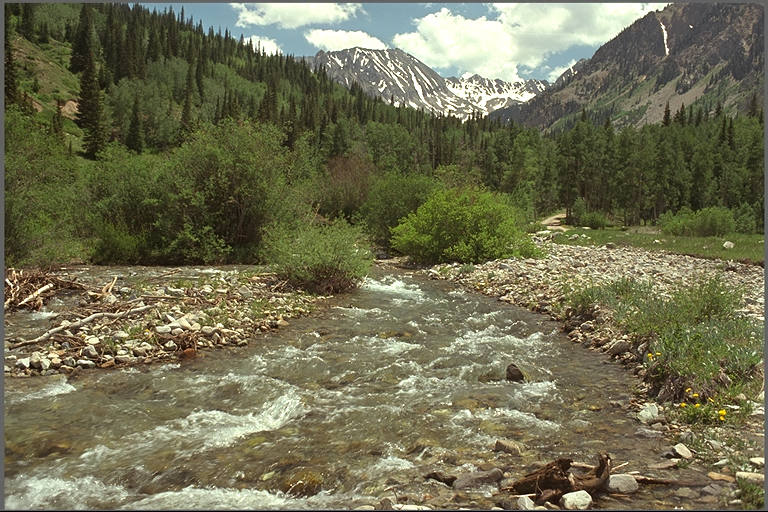}}
\end{center}
\caption{\textit{Kodim13}. Near-lossless image codec is trained with $\lambda=0.03$. Visualization of near-lossless reconstructions with different $\tau$.}
\label{fig:kodim13}
\end{figure}

\begin{figure}[!t]
\begin{center}
\subfloat[Raw image / $\lambda,\tau=0.03,0$]{
\label{fig:k15_0}
\includegraphics[width=0.45\linewidth]{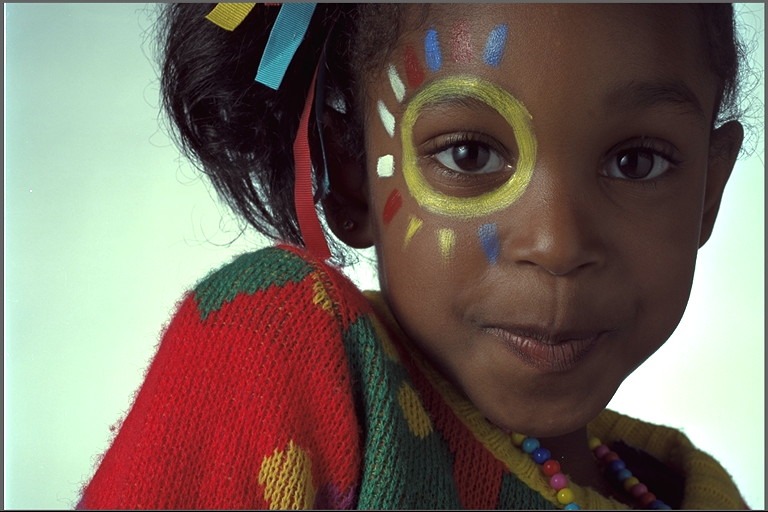}}
\subfloat[$\lambda,\tau=0.03,1$]{
\label{fig:k15_1}
\includegraphics[width=0.45\linewidth]{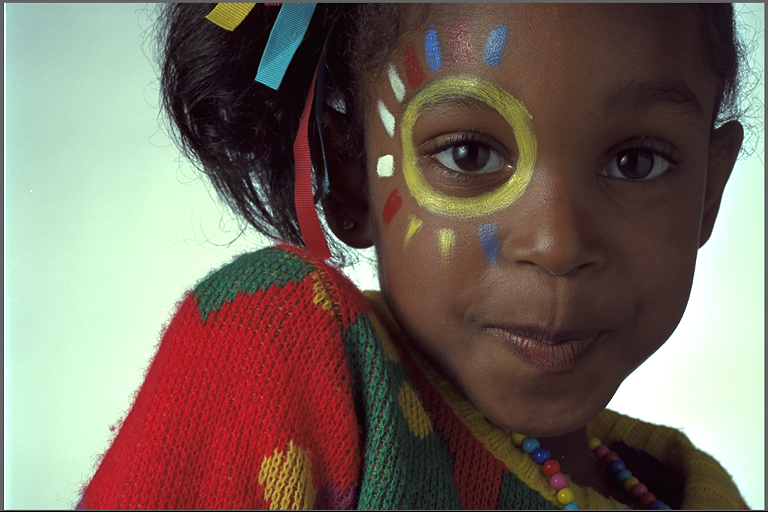}}\\
\subfloat[$\lambda,\tau=0.03,2$]{
\label{fig:k15_2}
\includegraphics[width=0.45\linewidth]{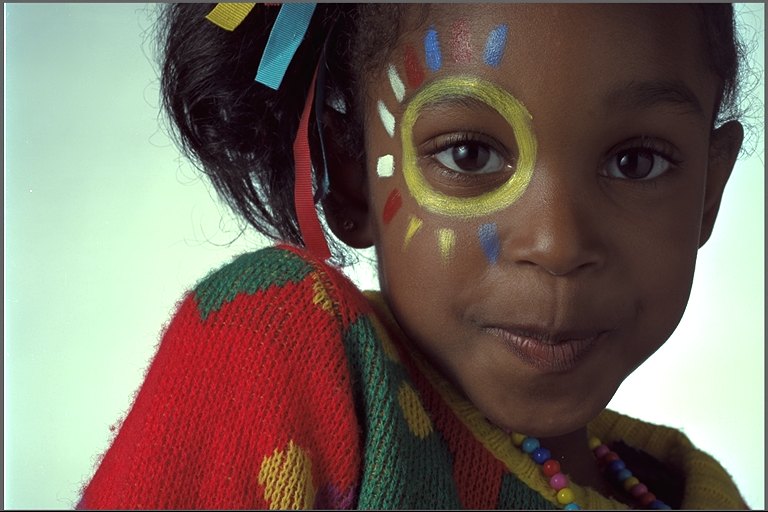}}
\subfloat[$\lambda,\tau=0.03,3$]{
\label{fig:k15_3}
\includegraphics[width=0.45\linewidth]{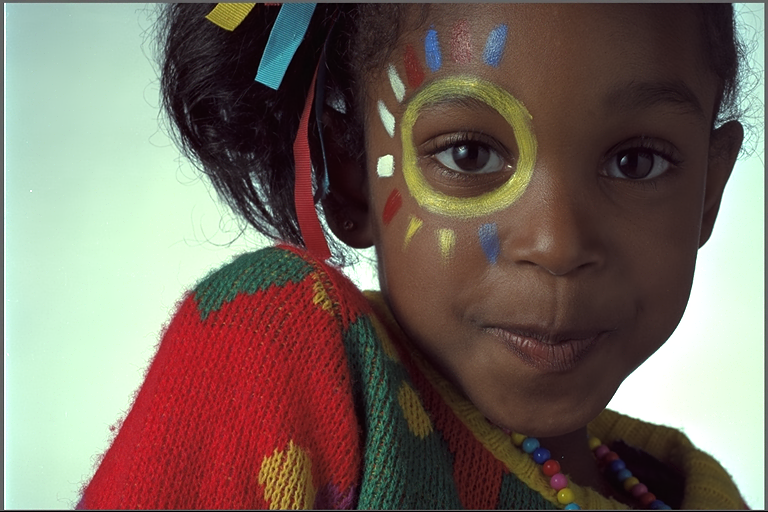}}\\
\subfloat[$\lambda,\tau=0.03,4$]{
\label{fig:k15_4}
\includegraphics[width=0.45\linewidth]{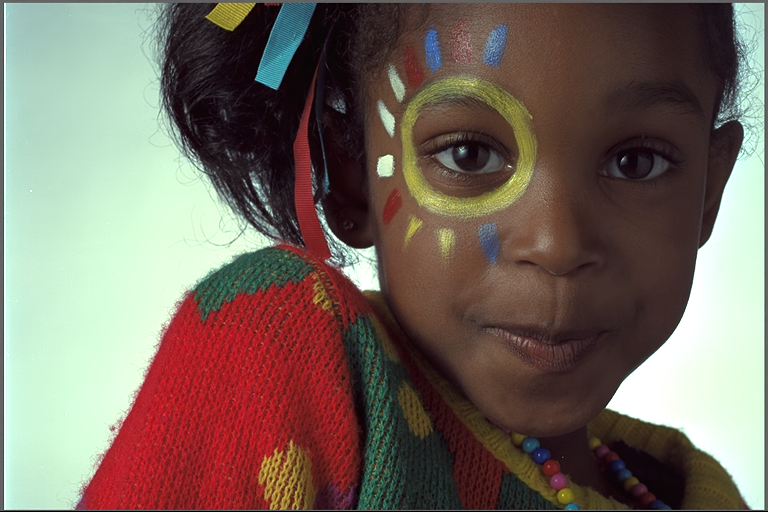}}
\subfloat[$\lambda,\tau=0.03,5$]{
\label{fig:k15_5}
\includegraphics[width=0.45\linewidth]{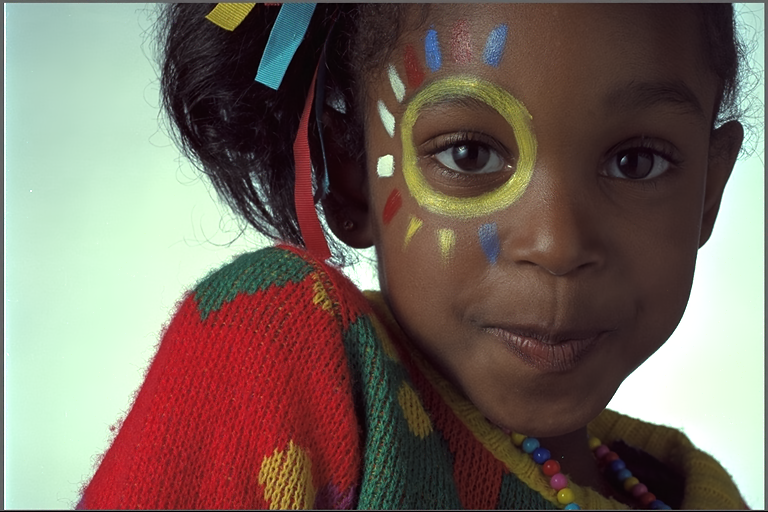}}
\end{center}
\caption{\textit{Kodim15}. Near-lossless image codec is trained with $\lambda=0.03$. Visualization of near-lossless reconstructions with different $\tau$.}
\label{fig:kodim15}
\end{figure}

\begin{figure}[!t]
\begin{center}
\subfloat[Raw image / $\lambda,\tau=0.03,0$]{
\label{fig:k21_0}
\includegraphics[width=0.45\linewidth]{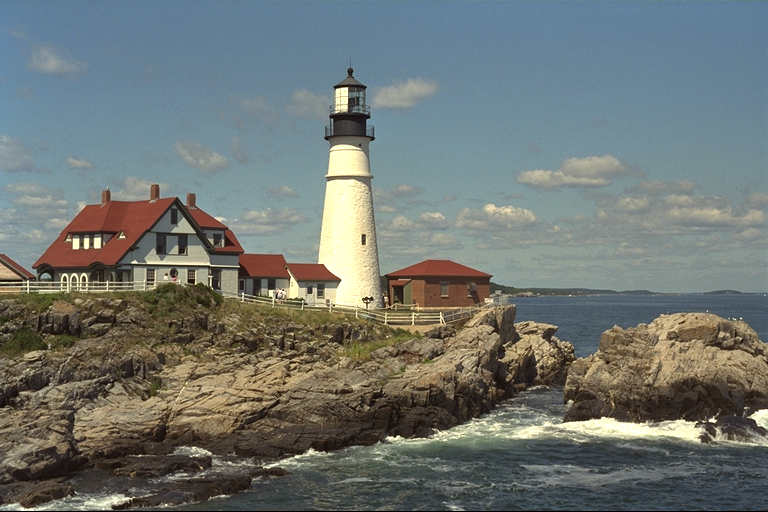}}
\subfloat[$\lambda,\tau=0.03,1$]{
\label{fig:k21_1}
\includegraphics[width=0.45\linewidth]{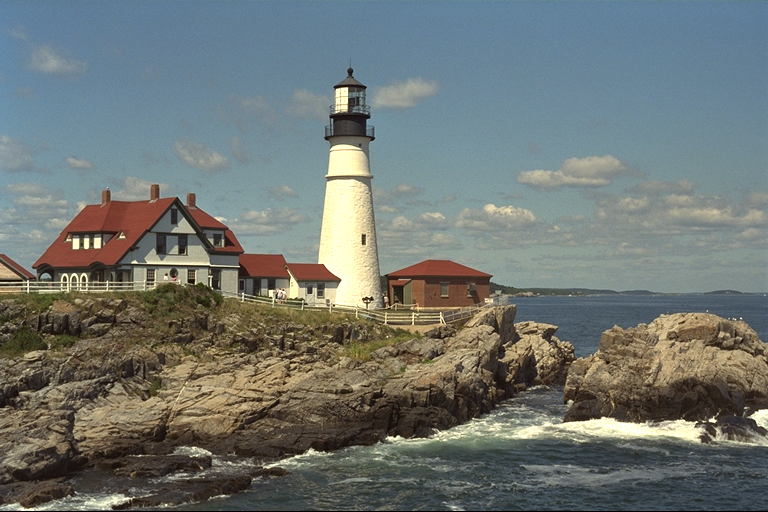}}\\
\subfloat[$\lambda,\tau=0.03,2$]{
\label{fig:k21_2}
\includegraphics[width=0.45\linewidth]{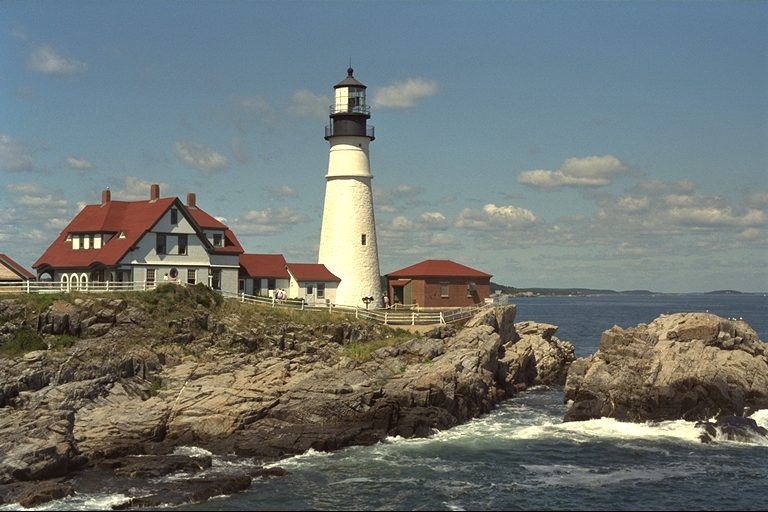}}
\subfloat[$\lambda,\tau=0.03,3$]{
\label{fig:k21_3}
\includegraphics[width=0.45\linewidth]{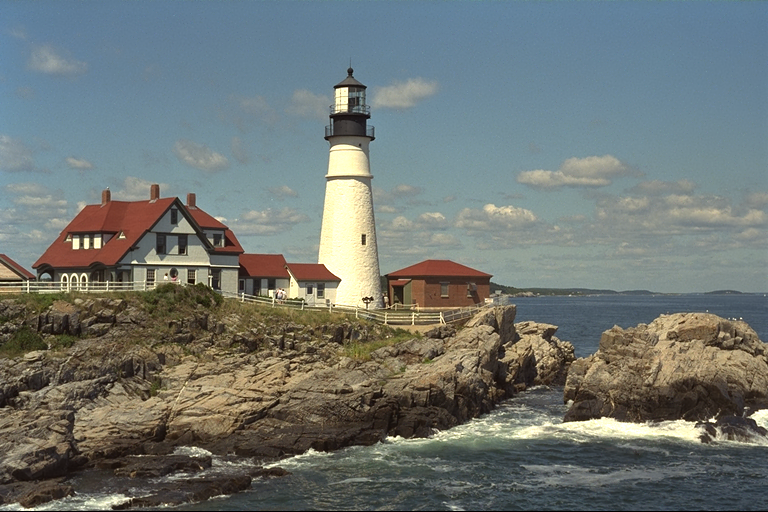}}\\
\subfloat[$\lambda,\tau=0.03,4$]{
\label{fig:k21_4}
\includegraphics[width=0.45\linewidth]{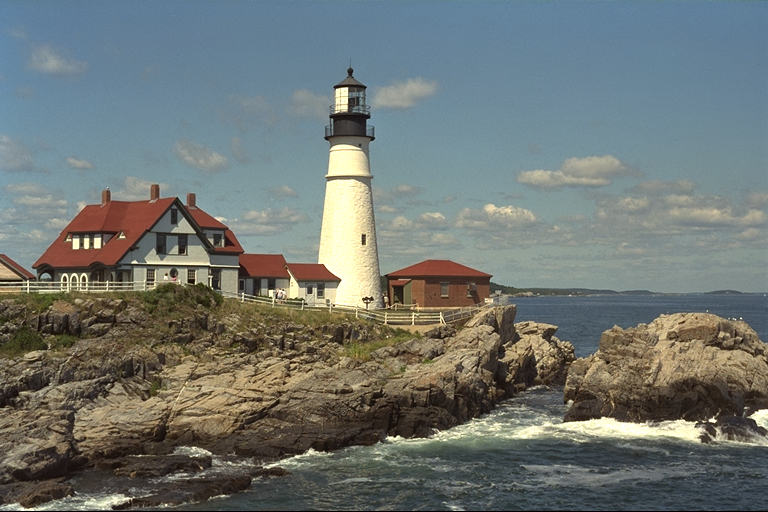}}
\subfloat[$\lambda,\tau=0.03,5$]{
\label{fig:k21_5}
\includegraphics[width=0.45\linewidth]{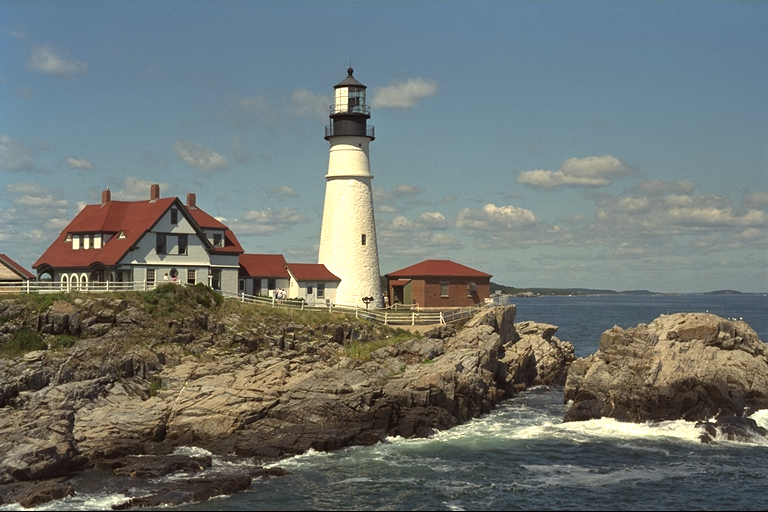}}
\end{center}
\caption{\textit{Kodim21}. Near-lossless image codec is trained with $\lambda=0.03$. Visualization of near-lossless reconstructions with different $\tau$.}
\label{fig:kodim21}
\end{figure}

}

\clearpage

\twocolumn
{\small
\bibliographystyle{ieee_fullname}
\bibliography{egbib}
}

\end{document}